\documentclass[a4paper,11pt,oneside]{amsart}

\pdfoutput=1

\usepackage{amsaddr}     
\usepackage[textheight=24cm,textwidth=16cm]{geometry}
\usepackage{appendix}
\usepackage[pdftex]{graphicx}

\usepackage{hyperref}

\usepackage[T1]{fontenc}

\usepackage{natbib}

\begin{document}

\newcommand{\esp}{\mathsf{E}}
\newcommand{\med}{\mathsf{med}}
\newcommand{\var}{\mathsf{var}}
\newcommand{\std}{\mathsf{std}}
\newcommand{\cov}{\mathsf{cov}}
\newcommand{\Err}{\text{Err}}
\newcommand{\bd}{\boldsymbol}
\newcommand{\etr}{\text{etr}}
\newcommand{\Tr}{\text{tr}}
\newcommand{\Hzero}{\textsc{H}_0}
\newcommand{\Hone}{\textsc{H}_1}
\newcommand{\Hi}{\textsc{H}_i}
\newcommand{\LR}{\textsc{LR}}
\newcommand{\GLR}{\textsc{GLR}(x)}
\newcommand{\PLR}{\textsc{PLR}}
\newcommand{\POR}{\textsc{POR}(x)}
\newcommand{\pOR}{\textsc{pOR}}
\newcommand{\BF}{\textsc{BF}(x)}
\newcommand{\FBF}{\textsc{FBF}}
\newcommand{\pval}{p_{\text{val}}}
\newcommand{\PFA}{\textsc{PFA}}
\newcommand{\PD}{\textsc{PD}}
\newcommand{\R}{\mathbb R}
\newcommand{\G}{\mathcal G}
\newcommand{\X}{\mathcal X}
\newcommand{\Le}{\mathcal L}
\newcommand{\Ha}{\mathcal H}
\newcommand{\ml}{\hat{\theta}_{\text{ML}}(x)}
\newcommand{\MMSE}{\hat{\theta}_{\text{MMSE}}(x)}
\newcommand{\MAP}{\hat{\theta}_{\text{MAP}}(x)}
\newcommand{\rmin}{r_{\text{min}}}
\newcommand{\rmax}{r_{\text{max}}}
\newcommand{\vol}{\text{vol}}
\renewcommand{\Pr}{\text{Pr}}

\newtheorem{theorem}{Theorem}
\newtheorem{proposition}{Proposition}
\newtheorem{lemma}{Lemma}
\newtheorem{corollary}{Corollary}
\newtheorem{remark}{Remark}

\title[Generalizations related to the Posterior distribution of the LR]
{Generalizations related to hypothesis testing with the Posterior distribution of the Likelihood Ratio.}

\begin{abstract}
The Posterior distribution of the Likelihood Ratio (PLR) is proposed by Dempster in 1974 for significance testing in the simple vs composite hypotheses case. In this hypotheses test case, classical frequentist and Bayesian hypotheses tests are irreconcilable, as emphasized by Lindley's paradox, Berger \& Selke in 1987 and many others. However, Dempster shows that the PLR (with inner threshold 1) is equal to the frequentist p-value in the simple Gaussian case. In 1997, Aitkin extends this result by adding a nuisance parameter and showing its asymptotic validity under more general distributions. Here we extend the reconciliation between the PLR and a frequentist p-value for a finite sample, through a framework analogous to the Stein's theorem frame in which a credible (Bayesian) domain is equal to a confidence (frequentist) domain. 

This general reconciliation result only concerns simple vs composite hypotheses testing. The measures proposed by Aitkin in 2010 and Evans in 1997 have interesting properties and extend Dempster's PLR but only by adding a nuisance parameter. Here we propose two extensions of the PLR concept to the general composite vs composite hypotheses test. The first extension can be defined for improper priors as soon as the posterior is proper. The second extension appears from a new Bayesian-type Neyman-Pearson lemma and emphasizes, from a Bayesian perspective, the role of the LR as a discrepancy variable for hypothesis testing.
\end{abstract}

\keywords{{hypothesis testing}, {PLR}, {p-value}, {likelihood ratio}, {frequentist and Bayesian reconciliation}, {Lindley's paradox}, {invariance}, 
{Neyman-Pearson lemma}}

\author{I. Smith}
\address{Laboratoire des Sciences du Climat et de l'Environnement ; IPSL-CNRS, France.\\
Universit\'e de Nice Sophia-Antipolis, CNRS, Observatoire de la C\^ote d'Azur, France.}
\email{zazoo@mac.com}

\author{A. Ferrari}
\address{Universit\'e de Nice Sophia-Antipolis, CNRS, Observatoire de la C\^ote d'Azur, France}
\email{andre.ferrari@unice.fr}
\thanks{Part of this work was published in \citet{smith14}}

\maketitle

\section{Introduction}


\subsection{Classical hypotheses test methodologies}
\label{sec_class_test}

Simple versus composite hypotheses testing is a general statistical issue in parametric modeling. It consists for a given observed dataset $x$ in choosing among the hypotheses
\begin{align}
\Hzero: \theta=\theta_0 ~~~~~ \Hone: \theta\in\Theta_1  
\end{align}
where the distribution of $x$ is characterized by the underlying unknown parameter $\theta$. Under the alternative hypothesis $\Hone$, $\theta$ takes a value different from the point $\theta_0$, and the uncertainty of $\theta$ is described by a prior probability density function $\pi_1(\theta)$ which is positive only for $\theta\in\Theta_1$. We assume that the data model $p(x|\theta)$ has the same expression under $\Hzero$ and $\Hone$.

To choose among $\Hzero$ and $\Hone$, a test statistic $T(x)$ (such as the Generalized Likelihood Ratio)
is generally compared to a threshold $\zeta$ and one decides to choose $\Hzero$ if $T(x)$ is greater than $\zeta$. If $\Hone$ is chosen whereas the true underlying $\theta$ was equal to $\theta_0$, a type I error is made in the decision.
Under the classical Neyman paradigm (see \citet{neyman33,neyman77}), the threshold $\zeta$ is chosen so that the 
probability of the type I error lies under (or is equal to) some fixed level $\alpha$, typically a 5\% error rate. Instead of inverting  this function, a p-value can be defined in order to serve as the test statistic to be directly compared to the 5\% level (\citet{lehmann05}):
\begin{align}
\pval(T(x_0)) = \Pr(T(x_0) < T(x)|\theta_0)   \label{eq_pval_f}
\end{align}
where $x_0$ is the observed dataset and $x$ the variable of integration. 
Note that with this notation, $\Hzero$ is rejected  when $\pval(T(x_0))$ is greater than some threshold.

On the Bayesian side, the test statistic classically used (\citet{robert07}) is the Bayes Factor (BF) defined by 
\begin{equation*}
\BF = \frac{p(x|\theta_0)}{\int d\theta~p(x|\theta)\pi_1(\theta)}   
\end{equation*}
Making a binary decision consists of choosing $\Hzero$ if $\BF$  is greater than some threshold, and the choice of the threshold is made in general by a straight interpretation of the BF. The Jeffreys's scale for example states that if the observed BF is between 10 and 100 there is a strong evidence in favor of $\Hzero$. The mere posterior probability $\Pr(\Hi|x)$ of an hypothesis may also be considered by itself.

A practical issue of the BF in the simple vs composite hypotheses test is that it is defined up to a multiplicative constant if the prior $\pi_1$ is improper\footnote{$\pi_1$ is called improper if its integral over $\Theta_1$ is infinite, which occurs if $\pi_1$ is constant over an unbounded domain for example.} even though the posterior distribution is proper. Partial BFs account for this issue by somehow using part of the data to update the prior into a proper posterior, and then use this posterior as the prior for the rest of the data. The most simply defined Partial BF is the Fractional BF (FBF) proposed by \citet{ohagan95}.

A related and more fundamental issue is Lindley's paradox, initially studied by \citet{jeffreys61} and called a paradox by \citet{lindley57}, which shows among others that, when testing a simple vs a composite hypothesis, the null hypothesis $\Hzero$ is too highly favoured against $\Hone$ for a natural diffuse prior under $\Theta_1$. More precisely, for example in the test of the mean of a Gaussian likelihood, the p-value $|x|$ defines the uniformly most powerful test, which is a very strong optimal property even according to at least part of the Bayesian community. However, for a fixed prior and some dataset $x$ that adjusts so that the associated classical p-value remains fixed (so that the evidence for $\Hzero$ shall not change), $\Pr(\Hzero|x)/\Pr(\Hone|x)$ tends to 1 as the sample size increases. This issue, intensively discussed and developed (see \citet{tsao06} for a quite recent study), is consensually considered as a real trouble by a quite large part of the community. Unlike the BF, other tests like the FBF or the \citet{bernardo10} test do not suffer from this problem in Lindley's frame. Other ideas have been developed which prevent Lindley's frame from occurring, avoiding troubles for the BF. \citet{berger87} for example argue that testing a simple hypothesis is an unreasonable question. Some other references will be given in the section \ref{sec_prev_temp}.

Among many frequentist and Bayesian p-values (several are listed by \citet{robins00}), the next most classical Bayesian-type hypotheses test statistic is the posterior predictive p-value, highlighted by \citet{meng94}. Unlike the BF which only integrates over the parameter space $\Theta$, the posterior predictive p-value integrates over the data space $\X$, like frequentist p-values. But unlike the frequentist p-value which integrates under the frequentist likelihood $p(x|\theta_0)$, it integrates under the predictive likelihood $p(x^{\text{pred}}|x_0) = \int d\theta~p(x^{\text{pred}}|\theta)\pi(\theta|x_0)$ where $x_0$ is the observed dataset. In a frequentist p-value only a statistic (ie a function of $x$ only) can define the domain of integration. On the contrary, in the posterior predictive p-value, a discrepancy variable (function of both $x$ and $\theta$) can be used to define the domain of integration. Note that the choice of the discrepancy variable to use there remains an issue.

Although a bit less classical, the approach of \citet{evans97} needs to be introduced because the tool and some of its properties are interesting and closely related to the ones derived in this paper. In the simple vs composite test case presented up to now, the tool proposed by \citet{evans97} and the ones studied in this paper are even mathematically equal. But the tool proposed by \citet{evans97} is defined to test more generally $\Hzero: \Psi(\theta) = \psi_0$ for a parameter of interest $\psi = \Psi(\theta)$. The test statistic consists of measuring the Observed Relative Surprise (ORS) related to the hypotheses by computing:
\begin{align}
\textsc{ORS}(x) = \Pr\left( \frac{\pi_\Psi(\Psi(\theta) | x)}{\pi_\Psi(\Psi(\theta))} \ge \frac{\pi_\Psi(\psi_0 | x)}{\pi_\Psi(\psi_0)}  \bigg{|} x\right)  \label{ORS}
\end{align}
The relative belief ratio of $\psi$ defined by $\textsc{RB}(\psi) = \pi_\Psi(\Psi(\theta) | x)\pi_\Psi(\Psi(\theta))^{-1}$ is measuring the change in belief in $\psi$ being the true value of $\Psi(\theta)$ from \textit{a priori} to \textit{a posteriori}. So if $\textsc{RB}(\psi_0)>1$ we have evidence in favor of $\Hzero$. Relative belief ratios are discussed in \citet{baskurt13} where $\textsc{RB}(\psi_0)$ is presented as the evidence for or against $\Hzero$ and (\ref{ORS}) is presented as a measure of the reliability of this evidence. This leads to a possible resolution of Lindley's paradox as the relative belief ratio can be large and ORS small without contradiction. See Example 4 of \citet{baskurt13} and note that \citet{evans97} shows that ORS converges to the classical p-value as the prior becomes more diffuse in this example.

\subsection{Posterior distribution of the Likelihood Ratio (PLR)}
\label{sec_intro_plr}

Let's focus again on the simple vs composite hypothesis test. Contrary to the posterior predictive p-value, the Posterior distribution of the Likelihood Ratio (PLR) does not integrate over some data which are unobserved, but only integrates over $\Theta$. It still conditions upon the only observed variable, namely $x_0$, like for the BF, but on a domain defined from a divergence variable, like the posterior predictive p-value. This statistic proposed by \citet{dempster74} is defined by
\begin{equation}
\PLR(x, \zeta) = \Pr\big(\LR(x,\theta) \le \zeta \big| x\big)  \label{def_PLR}
\end{equation}
where $\LR(x,.)$ is the Likelihood Ratio
\begin{equation*}
\LR(x,\theta) = \frac{p(x|\theta_0)}{p(x|\theta)}  ~~~~~~~~~~~~ \theta\in\Theta_1 
\end{equation*}

Since $\theta$ is random, the deterministic function $\LR(x,.)$ evaluated at the random variable $\theta$ becomes naturally random with some posterior distribution characterized by its cumulative distribution, the PLR. As emphasized by \citet{birnbaum62}, \citet{dempster74} and \citet{royall97}, the threshold $\zeta$ which compares the original likelihoods under $\Hzero$ and under $\Hone$ is directly interpretable and can be chosen the same way an error level $\alpha$ is chosen in the Neyman-Pearson paradigm. ``$\PLR(x,1) = 0.1$'' for example reads ``The probability that the likelihood of $\theta_1$ is more than the likelihood of $\theta_0$ is 0.1.''. 

The PLR can therefore be used for a binary decision, by fixing $\zeta$ and deciding to reject $\Hzero$ if $\PLR(x,\zeta)$ is greater than, say, 0.9. One can check if the binary decision is sensitive to the choice of both thresholds by making the test for several thresholds and see if the decision is different. In the extreme case, note that due to the nice definition of the PLR, one can simply display $\PLR(x,\zeta)$ as a function of $\zeta$ to get a broad view. The range of $\zeta$ under which $\PLR(x,\zeta)$ grows typically from 0.2 to 0.8 indicates if the decision for $\Hzero$ or $\Hone$ is clear, or not. As soon as the posterior can be sampled, these computations and graphs are very easy to display as will be explained later.

The PLR has been first proposed by \citet{dempster74, dempster97}, then studied especially by \citet{aitkin97} and \citet{aitkin10} but also used and analyzed by \citet{aitkin05, aitkin09}. As mentioned in the previous subsection, it turns out that the PLR is also closely related to the ORS proposed by \citet{evans97}, which generalizes the PLR. The PLR is also closely related to the e-value associated to the Full Bayesian Significance Test (FBST) from \citet{pereira99} and slightely revisited by \citet{borges07} which then somehow generalizes the PLR by adding a reference distribution on $\theta$, and by systematically dealing with the case where the null hypothesis domain $\Theta_0$ has a dimension less than $\Theta_1$ but which is not necessarily restricted to the point $\Theta_0 = \{\theta_0\}$. We do not list the results found by these different analyses, apart from some specifically mentioned ones.

The PLR turns out to be a natural Bayesian measure of evidence of the studied hypotheses since it involves only the posterior distribution of $\theta$ (no integral over $\mathcal X$) and the likelihood, claimed by \citet{birnbaum62}, \citet{royall97} and others, to be the only tool that can measure evidence. Unlike the BF, the PLR is well defined for an improper prior as soon as the posterior is proper, and is not subject to Lindley's paradox. It is also invariant under any isomorphic transformation of the $\X$ space and any transformation of the $\Theta$ space, as a consequence of being a mere function of the likelihood. These last properties were emphasized for example for the e-value associated to the FBST. 

The PLR also consists in a natural alternative to the BF in different regards. 
To start with, the PLR first compares (compares $p(x|\theta_0)$ and $p(x|\theta)$) and then integrates, whereas the BF first integrates and then compares (compares $p(x|\theta_0)$ and $\int d\theta~p(x|\theta)\pi(\theta)$). 
Second, \citet{newton94} and many others show that if the prior under $\Hone$ is proper, the BF is simply the posterior mean of the LR, ie the mean of the distribution described by the PLR\footnote{Alternatively, note that if we had defined the BF and LR with the alternative hypothesis at the numerator of these fractions, the BF would have been the prior mean of the LR.}. However a point estimate is in general not given alone but accompanied by an uncertainty indicator. \citet{smith10d} show that the posterior mean of the LR raised at some power is equal to the FBF introduced previously;
the mean of the PLR is given by the BF and its variance is easily related to the FBF.
However, \citet{smith10e} shows that the Generalized Likelihood Ratio bounds the support (values of $\zeta(x)$ for which $\PLR(x,\zeta)>0$) of the PLR and that at this lower bound the PLR in general starts by an infinite derivative. In addition to this theoretical result, numerical examples also indicate that the posterior density function of the LR is in general highly asymmetric. Therefore, the BF (point estimate of the LR) or any standard centered credible intervals do not appear to be relevant inferences about the LR seen as random variable. Instead, the same way the BF is to be thresholded, the actual information about $\LR(x,\theta)$ which seems to be relevant, and invariant under the transformation $\LR(x,\theta) \mapsto (\LR(x,\theta))^{-1}$, is to indicate its cumulative posterior distribution, which is precisely the PLR.

In practice, the PLR can be straightforwardly computed as soon as the posterior distribution $\pi(\theta|x)$ is sampled. Just obtain from a Monte Carlo Markov Chain (MCMC) algorithm an almost i.i.d. chain $\{\theta^{[1]},...,\theta^{[m]}\}$ from the posterior distribution $\pi(\theta|x)$ and compute $\LR(x,\theta^{[i]})$ for each sample. The resulting histogram sketches the posterior density of the LR and the plot of the empirical cumulative distribution of the LR chain sketches the PLR as a function of $\zeta$.

The PLR has been realistically and thouroughly applied by \citet{smith10e} to the detection of extra-solar planets from images acquired with the dedicated instrument SPHERE mounted on the Very Large Telescope. At this moment, only very finely simulated images were available. The PLR has been applied to two simulated datasets, one in which no extra-solar planet is present (dataset simulated under $\Hzero$) and the other in which an extra-solar planet is present ($\Hone$ dataset). 
Although the extra-solar planet is very dark ($10^6$ times less bright than the star it surrounds), close to the star (angular distance in the sky of $0.2$ arcseconds i.e. $6.10^{-5}$ degrees), and although only $2\times20$ images were used, thanks to the quality of the optical instruments and of the statistical model the detection and not detection were evident, with $\PLR(x,0.1) = 0.0$ for the dataset under $\Hzero$ and $\PLR(x,0.1) = 0.94$ for the dataset under $\Hone$. As studied by \citet{smith10e}, the statistical model and consecutive method are very satisfying compared to classical methods.  


\subsection{Problematics addressed here}

Despite its potential interest the PLR has not been extensively studied up to now. This paper aims at contributing in this investigating work by some new results. 

In the simple vs composite hypotheses test case, it turns out that the PLR plays a strong role in understanding the possible reconciliation between frequentist and Bayesian hypothesis testing. The PLR with inner threshold $\zeta=1$ is simply equal to some frequentist p-value for some ``likelihood - prior - hypotheses'' combinations. \citet{dempster74, aitkin97} have first noticed and highlighted this equivalence when testing the mean of a gaussian likelihood with a uniform prior. 

In the section \ref{sec_eq_plr_freq}, we extend the conditions of this equivalence result under a frame analogous to the one used to reconcile confidence and credible domains. The subsection \ref{sec_prev_temp} synthesizes the long quest of reconciliation between frequentist and Bayesian hypotheses tests, the subsection \ref{sec_equi} proves and discusses the reconciliation reached between the PLR and some frequentist p-value in such an invariant frame, the subsection \ref{sec_ex_pers} gives examples and perspectives, and the subsection \ref{sec_connect} discusses the connection between this reconciliation result and the one obtained between (frequentist) confidence domains and (Bayesian) credible regions.

\citet{aitkin97} and \citet{aitkin10} extended the PLR definition to an hypotheses test frame identical to the one presented at the end of the subsection \ref{sec_class_test}, namely $\Hzero: \Psi(\theta) = \psi_0$, also considered by \citet{evans97} and others. However, the PLR has not been yet generalized to the general composite vs composite hypotheses test. The generalization is somehow unnatural for a frequentist p-value because for a simple hypothesis $\Hzero:\theta=\theta_0$ the p-value is a frequentist probability conditioned on the fixed parameter $\theta_0$ (see equation \ref{eq_pval_f}), although a conditional probability cannot be defined on a composite set $\Theta_0$ if no probability distribution over $\Theta$ is used. By contrast, the PLR is reciprocally a probability conditioned upon the observed dataset $x_0$, and $x_0$ naturally remains fixed under a composite hypothesis. Therefore, the transition from simple to composite null hypothesis does not raise immediate obstacles for the PLR. However, a joint measure on the parameter spaces of both hypotheses is still required. 

The section \ref{sec_generalization} proposes and motivates two generalizations of the PLR. The mathematical expressions of the two extensions are simply given and rephrased in the subsection \ref{sec_ext_plr}. The first extension in particular enables the use of improper priors as soon as the posterior is proper. It can therefore be used in the subsection \ref{sec_examp} for the detection of precipitation change where an almost improper prior is to be used but leads to a proper posterior. On the other side, the second extension, made of two symmetrical probabilities, appears in a Bayesian version of the Neyman-Pearson lemma. As detailed in the subsection \ref{sec_bayes_neyman}, the two joint measures associated to no specific discrepancy variable lead through the lemma to the discrepancy variable $\LR(x_0,\theta)$.

A concluding discussion is proposed in the section \ref{sec_conclu}. The appendices essentially present the proofs of the mathematical results.


\section{Equivalence between the PLR and a frequentist p-value}
\label{sec_eq_plr_freq}

\subsection{Previous tentative reconciliations of frequentist and Bayesian tests}
\label{sec_prev_temp}

As introduced in the section \ref{sec_class_test}, Lindley's paradox presents a frame where $\Pr(\Hzero|x)$ (often thought as being \textit{the} Bayesian measure of evidence) may be expected to be equal to the frequentist p-value, but happens not to be. Also, the BF is not satisfying in the frame ``point null hypothesis $\Hzero$ and diffuse prior $\pi_1$''. This highlights the need for other Bayesian-type hypotheses tests, but also raises more generally the question of reconciliation between frequentist and Bayesian hypotheses tests. 

The conditions upon which frequentist (\citet{neyman77}) and Bayesian (\citet{jeffreys61}) answers agree is always of interest in order to understand the interpretation of the procedures and the limits of the two paradigms, somehow defined by what they are \textit{not}. 

A first approach to see when could frequentist and Bayesian hypotheses tests be unified consists of analyzing, for different hypotheses likelihoods and priors, when are the classical p-value and $\Pr(\Hzero|x)$ equal. These two concepts are to be compared because they both seem to handle only $\Hzero$ and in very simple ways, one from the frequentist the other from the Bayesian perspectives\footnote{Note however that $\Hone$ is implicitely taken into account through the marginal distribution of $x$ in $\Pr(\Hzero|x)$.}. It turns out that unlike for a composite null hypothesis (e.g. \citet{casella87}), for a point null hypothesis Lindley's paradox $\Pr(\Hzero|x) > \pval(x)$ always seems to hold. \citet{berger87b} in particular show that among very broad classes of priors $\Pr(\Hzero|x) > \pval(x)$ always holds for $\Pr(\Hzero)=0.5$. Also see the extensive list of references included. \citet{oh99} follows this analysis by studying the effect of the choice of $\Pr(\Hzero)$. 

Another approach consists of modifying the standard frequentist procedure and/or the standard Bayesian hypotheses test procedure, but still relying on the p-value and $\Pr(\Hzero|x)$, to see if they can then be made equivalent. \citet{berger87} for example study ``precise'' (concentrated) but not exactly ``point'' hypotheses, \citet{berger94} use frequentist p-values computed from a likelihood conditioned upon a set in which lies the observed dataset, not on the dataset itself, and define a non-decision domain in the BF test procedure. \citet{sellke01} advocate calibrating (\textit{rescaling}) the frequentist p-value to relate this new statistic to other test statistics. 

As already mentioned in the section \ref{sec_class_test}, one can also try to unify the p-value to Bayesian type statistics fully \textit{different} from the BF, to see when frequentist and Bayesian types hypotheses tests can be made equivalent. In particular, when \citet{dempster74} proposed to use the PLR, he also mentioned that when testing the mean of a normal distribution, the PLR is equal to the classical frequentist p-value when computed for a uniform prior and with inner parameter $\zeta=1$. This fundamental result was again emphasized by \citet{aitkin97} and \citet{dempster97}.

\citet{aitkin97}  asymptotically extended this result to any regular distribution, making use of the asymptotic convergence of a regular distribution towards a normal distribution. For any regular continuous distribution and a smooth prior, the PLR, with $\zeta=1$, tends asymptotically to the classical p-value. Also, with a nuisance parameter $\eta$ and still calling $\theta$ the tested parameter, he defines LR by $\LR(x,\theta,\eta) = p(x|\theta_0,\eta)/p(x|\theta,\eta)$, in which case under the same conditions as in the previous case the PLR is equal to a p-value. For a normal distribution, when testing the mean and considering the variance as a nuisance parameter, the result is also true for a finite sample.

\subsection{New reconciliation result}
\label{sec_equi}

The sets of conditions found by \citet{dempster74} and \citet{aitkin97} under which the PLR (with $\zeta=1$) is equal to a p-value are directly related to the test of the mean of a normal distribution under a uniform prior. The next subsection generalizes this exact finite-sample result under the frame of statistical invariance. As will be discussed at the end of the section, although the technical conditions derived here may be relaxed, it may be difficult to find, at least within the current statistical frame, a fundamentally more general frame of conditions for an equality between the PLR and a p-value to hold.

As presented in current classical textbooks in Bayesian statistics (\citet{berger85}, \citet{robert07}), invariance in statistics arises from the invariant Haar measure defined on some topological group. Throughout this subsection and the related appendices, we will use the notions and results synthesized by \citet{nachbin65} and \citet{eaton89}. The tools necessary to understand the result are introduced in the appendix 1.

In this frame, the PLR (given by an integral over the parameter space $\Theta$) can be reexpressed as an integral over the sample space $\mathcal X$, equal to a p-value for $\zeta=1$. In this subsection $x$ and $\theta$ denote random variables or variables of integration according to the context. 

First, for clarity, we give the equivalence between the PLR and a frequentist integral under the assumption that the sample space $\X$, the parameter space $\Theta$ and the transformations group $\G$ are isomorphic. 

\begin{theorem}
\label{th_lebesgue}
Call $\mathcal P_\Theta = \{p(.|\theta),\theta\in\Theta\}$ a family of probability densities with respect to the Lebesgue measure on $\mathcal X$, and call $\G$ a group acting on $\X$. Assume that $\mathcal P_\Theta$ is invariant under the action of the group $\G$ on $\mathcal X$ and note $\bar g\theta$ the induced action of the element $g\in\G$ on the element $\theta\in\Theta$. Call $H^r$ and $H^l$ respectively a right and left Haar measures of $\G$ and assume that 
\begin{enumerate}
\item $\G$, $\mathcal X$ and $\Theta$ are isomorphic.
\item The prior measure $\Pi^r$ is the measure induced by $H^r$ on $\Theta$.
\item The measure induced by $H^l$ on $\mathcal X$ is absolutely continuous with respect to the Lebesgue measure. Call $\pi^l$ the corresponding density.
\item The marginal density of $x$ is finite, so that the posterior measure $\Pi^r_{x}$ on $\Theta$, classically defined by the equation (\ref{eq_posteri}), defines the posterior probability $\Pr(.|x_0)$.
\end{enumerate}
Then, the PLR defined by the equation (\ref{def_PLR}) can be reexpressed for any $\zeta>0$ as the frequentist integral:
\begin{align}
\PLR(x_0,\zeta) &= \Pr\left(~\frac{p(x_0|\theta_0)}{\pi^l(x_0)} ~ \le ~ \zeta~ \frac{p(x|\theta_0)}{\pi^l(x)} \mid \theta_0\right)  
\end{align}
where $x_0\in\mathcal X$ is the observed data and $\theta_0\in\Theta$ the parameter value under the null hypothesis.
\end{theorem}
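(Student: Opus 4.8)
The plan is to use the isomorphism $\G\cong\X\cong\Theta$ from hypothesis~(1) to convert the posterior integral over $\Theta$ that defines the PLR into a sampling integral over $\X$ by a single group change of variables, tracking the Jacobian and the left/right Haar factors so that the right Haar prior on $\Theta$ is sent exactly onto the frequentist density $p(\cdot\mid\theta_0)$ on $\X$. First I would expand the PLR from its definition~(\ref{def_PLR}) together with the posterior measure~(\ref{eq_posteri}):
\[ \PLR(x_0,\zeta) = \frac{\displaystyle\int_{\{\theta\,:\,\LR(x_0,\theta)\le\zeta\}} p(x_0\mid\theta)\,\Pi^r(d\theta)}{\displaystyle\int_\Theta p(x_0\mid\theta)\,\Pi^r(d\theta)}, \]
which is well defined and proper thanks to hypothesis~(4). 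Identifying $\Theta$ with $\G$ so that $\theta_0$ becomes the identity, the induced action is transitive and I parametrize $\theta=\bar g\theta_0$; by hypothesis~(2) the prior $\Pi^r(d\theta)$ becomes the right Haar measure $H^r(dg)$, so both integrals turn into integrals over $g\in\G$.

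Next I would bring in the invariance of $\mathcal P_\Theta$: the statement that $g$ carries the density $p(\cdot\mid\theta)$ to $p(\cdot\mid\bar g\theta)$ reads, for the densities themselves, as the equivariance relation
\[ p(x_0\mid\bar g\theta_0) = p(g^{-1}x_0\mid\theta_0)\,\bigl|J_{g^{-1}}(x_0)\bigr|, \]
where $J$ denotes the Jacobian of the action of $\G$ on $\X$. Setting $x:=g^{-1}x_0$, a bijection $\G\to\X$ by hypothesis~(1), converts $p(x_0\mid\theta)$ into $p(x\mid\theta_0)$ up to this Jacobian. The left invariance of the measure $\pi^l(x)\,dx$ induced by $H^l$ (hypothesis~(3)) gives $\pi^l(x)\,\bigl|J_{g^{-1}}(x_0)\bigr|=\pi^l(x_0)$, hence $\bigl|J_{g^{-1}}(x_0)\bigr|=\pi^l(x_0)/\pi^l(x)$. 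Substituting this into the event $\LR(x_0,\theta)\le\zeta$, that is $p(x_0\mid\theta_0)\le\zeta\,p(x_0\mid\theta)$, and dividing by $\pi^l(x_0)$ reproduces verbatim the target event $p(x_0\mid\theta_0)/\pi^l(x_0)\le\zeta\,p(x\mid\theta_0)/\pi^l(x)$.

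The crux, and the step I expect to be the main obstacle, is the change of variables $g\mapsto x=g^{-1}x_0$ at the level of measures. I would compute the push-forward of $H^r(dg)$ under this orbit map by factoring it through the inversion $g\mapsto g^{-1}$, which exchanges right and left Haar measures, followed by the translation by $x_0$, which introduces the modular function $\Delta$ of $\G$; the outcome is that $H^r(dg)$ is sent to $\Delta(x_0)^{\pm1}\,\pi^l(x)\,dx$, namely a constant in $x_0$ times the left Haar measure $H^l(dx)=\pi^l(x)\,dx$ on $\X$. Here the isomorphism hypothesis is indispensable: it makes the orbit map a bijection and lets the inversion swap the two Haar measures cleanly, so that the residual factor is exactly the modular function rather than some intractable density ratio.

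Assembling the pieces, the numerator integrand becomes
\[ p(x_0\mid\bar g\theta_0)\,\Pi^r(d\theta) = p(x\mid\theta_0)\,\frac{\pi^l(x_0)}{\pi^l(x)}\,\Delta(x_0)^{\pm1}\,\pi^l(x)\,dx = \pi^l(x_0)\,\Delta(x_0)^{\pm1}\,p(x\mid\theta_0)\,dx, \]
so the $x$-dependent factor $\pi^l(x)$ cancels between the Jacobian and the Haar push-forward, while the remaining constant $\pi^l(x_0)\,\Delta(x_0)^{\pm1}$ is common to the numerator and the denominator and therefore cancels in the ratio. Consequently the denominator equals $\int_\X p(x\mid\theta_0)\,dx=1$ because $p(\cdot\mid\theta_0)$ is a Lebesgue density, and the PLR collapses to the integral of $p(x\mid\theta_0)\,dx$ over the target event, that is to the claimed frequentist probability under $\theta_0$. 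I would close by checking measurability of the event for every $\zeta>0$ and by verifying that the abelian location case ($\pi^l\equiv1$, trivial $\Delta$) recovers the Dempster--Aitkin computation as a consistency check.
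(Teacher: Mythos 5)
Your proof is correct, and while it shares the paper's core mechanism (turn the posterior integral over $\Theta$ into a group integral via the right Haar prior, then into a sampling integral over $\X$ via the orbit map at $x_0$, and observe that the marginal is constant), it takes a genuinely more direct route. The paper proceeds in two stages: it first proves the more general Theorem \ref{th1} of Appendix 2, in which the reference measure on $\X$ is the measure $\mu^r$ induced by $H^r$ and the weight in the frequentist event is the modulus $\Delta\big(\phi_x^{-1}c\big)$, relying on Lemma \ref{lemme_mu_relativ_inv} (relative invariance of the $H^r$-induced measures, with a $\Delta$ factor); Theorem \ref{th_lebesgue} then follows in Appendix 3 from the conversion identities $p^{\mu}(x|\theta)=p(x|\theta)/\pi^r(x)$ and $\Delta\big(\phi_x^{-1}c\big)=\pi^r(x)/\pi^l(x)$. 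You instead stay in the Lebesgue frame throughout and compress all of the modulus bookkeeping into the single identity $|J_{g^{-1}}(x_0)|=\pi^l(x_0)/\pi^l(x)$, which you correctly derive from the fact that the measure induced by $H^l$ on $\X$ is \emph{genuinely} (not merely relatively) invariant under the action of $\G$ --- the exact dual of the paper's lemma for the $H^r$-induced measure. Your identification of the push-forward of $H^r(dg)$ under $g\mapsto g^{-1}x_0$ as an $x$-free constant times $\pi^l(x)\,dx$, and the cancellation of that constant together with $\pi^l(x_0)$ between numerator and denominator (forcing the denominator to collapse to $\int_\X p(x|\theta_0)\,dx=1$), mirror the paper's steps using equation (\ref{eq_prop_inverse_g}) and the observation that $m(x_0)$ is constant. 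Two caveats, neither fatal: writing $\Delta(x_0)^{\pm 1}$ abuses notation, since $\Delta$ lives on $\G$ (the constant is $\Delta$ evaluated at the group element carrying the base point implicit in $\pi^l$ to $x_0$, and in any case it cancels); and your use of Jacobians silently assumes the action is regular enough for the change-of-variables formula to apply, an assumption the paper's abstract measure-theoretic treatment avoids. What your route buys is a shorter, self-contained proof of Theorem \ref{th_lebesgue} in which the cancellation mechanism is transparent; what the paper's two-stage route buys is the strictly more general Theorem \ref{th1}, valid with no absolute continuity with respect to Lebesgue measure, from which the Lebesgue statement and its corollaries all follow.
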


A more general theorem (Theorem 2) derived in a frame which avoids the Lebesgue
assumption and may involve more technical conditions is proved in Appendix 2.
Theorem 1 is a consequence of Theorem 2 and its proof is given in Appendix 3.

The assumption that $\G$ and $\X$ are isomorphic is easily relaxed by replacing the sample space by the space of a sufficient statistic. Recall that if $X$ is a random variable whose probability distribution is parametrized by $\theta$, $S(X)$ is called a sufficient statistic of $\theta$ if the probability distribution of $X$ conditioned upon the random variable $S(X)$ does not depend on $\theta$. Note that according to the \citet{darmois35} theorem, among families of probability distributions whose domains do not vary with the parameter being estimated, only in exponential families is there a sufficient statistic whose dimension remains bounded as the sample size increases. 

The expression of the theorem \ref{th1} is simply extended by replacing $X$ by a sufficient statistic $S(X)$ in the assumptions and by replacing in the frequentist integral the probability density of $X$ by the one of $S(X)$: 
\begin{corollary}
\label{th1_suff}
Call $\mathcal P_\Theta = \{p(.|\theta),\theta\in\Theta\}$ a family of probability densities with respect to any measure on $\mathcal X$. Call $S(X)$, for $X\in\X$, a sufficient statistic of $\theta$ and $\mathcal P_{S,\Theta} = \{p_S(.|\theta),\theta\in\Theta\}$ the family of probability densities of $S(X)$ with respect to the Lebesgue measure on $S(\X)$. Call $\G$ a group acting on $S(\X$). Assume that $\mathcal P_{S,\Theta}$ is invariant under the action of the group $\G$ on $S(\mathcal X)$ and note $\bar g\theta$ the induced action of the element $g\in\G$ on the element $\theta\in\Theta$. Call $H^r$ and $H^l$ respectively any right  and left Haar measures of $\G$. 
Assume that 
\begin{enumerate}
\item $\G$, $S(\mathcal X)$ and $\Theta$ are isomorphic.
\item The prior measure $\Pi^r$ is the measure induced by $H^r$ on $\Theta$.
\item The measure induced by $H^l$ on $S(\mathcal X)$ is absolutely continuous with respect to the Lebesgue measure. Call $\pi^l$ the corresponding density.
\item The marginal density of $x$ is finite, so that the posterior measure $\Pi^r_{x}$ on $\Theta$, classically defined by the equation (\ref{eq_posteri}), defines the posterior probability $\Pr(.|x_0)$.
\end{enumerate}
Then, the PLR defined by the equation (\ref{def_PLR}) can be reexpressed, with $x_0\in\mathcal X$, $\theta_0\in\Theta$ and $\zeta>0$, as the frequentist integral:
\begin{align}
\PLR(x_0,\zeta) &= \Pr\left(~\frac{p_S(S(x_0)|\theta_0)}{\pi^l(S(x_0))} ~ \le ~ \zeta~ \frac{p_S(S(x)|\theta_0)}{\pi^l(S(x))} \mid \theta_0\right)  
\end{align}
where $x_0\in\mathcal X$ is the observed data and $\theta_0\in\Theta$ the parameter value under the null hypothesis.
\end{corollary}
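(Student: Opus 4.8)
The plan is to reduce the corollary to Theorem~\ref{th_lebesgue} by passing to the sufficient statistic. The central observation is that the PLR computed from the full data $x_0$ coincides with the PLR computed from $S(x_0)$; once this is established, applying Theorem~\ref{th_lebesgue} to the \emph{reduced} model — in which $S(X)$ plays the role of the data, $S(\X)$ the role of $\X$, and $\mathcal P_{S,\Theta}$ the role of $\mathcal P_\Theta$ — yields the stated frequentist integral with no further invariance argument.

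First I would invoke the factorization criterion of sufficiency: since $S(X)$ is sufficient for $\theta$, the density factorizes as $p(x|\theta)=g(S(x),\theta)\,h(x)$. Two consequences follow. On the one hand the likelihood ratio $\LR(x,\theta)=p(x|\theta_0)/p(x|\theta)=g(S(x),\theta_0)/g(S(x),\theta)$ depends on $x$ only through $S(x)$, so the event $\{\LR(x,\theta)\le\zeta\}$ is a function of $S(x)$ alone. On the other hand the factor $h(x)$, being independent of $\theta$, cancels between the integrand and the normalising constant of the posterior, so that the posterior distribution of $\theta$ depends on the data only through $S(x)$. Combining these, I would conclude that $\PLR(x_0,\zeta)=\Pr(\LR(x_0,\theta)\le\zeta\,|\,x_0)$ equals the posterior probability of the same event conditioned on $S(x_0)$ and computed entirely within $\mathcal P_{S,\Theta}$, i.e.\ the PLR of the reduced model.

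Next I would check that the reduced model satisfies the four hypotheses of Theorem~\ref{th_lebesgue}: $\G$, $S(\X)$ and $\Theta$ are isomorphic by assumption~(1); the prior $\Pi^r$ is the $H^r$-induced measure by~(2); the $H^l$-induced measure on $S(\X)$ has Lebesgue density $\pi^l$ by~(3); and the marginal is finite by~(4). Theorem~\ref{th_lebesgue} then gives
\begin{equation*}
\PLR(x_0,\zeta)=\Pr\left(\frac{p_S(S(x_0)|\theta_0)}{\pi^l(S(x_0))}\le\zeta\,\frac{p_S(S(x)|\theta_0)}{\pi^l(S(x))}\,\bigg|\,\theta_0\right),
\end{equation*}
where the frequentist probability is taken over $S(x)$ distributed under $p_S(\cdot|\theta_0)$. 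Since the event inside depends on $x$ only through $S(x)$, integrating over $x$ under $p(\cdot|\theta_0)$ and integrating over $S(x)$ under the pushforward $p_S(\cdot|\theta_0)$ give the same value, which matches the statement of the corollary.

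I expect the only delicate step to be the bookkeeping of the two measures in the sufficiency reduction: one must verify that the posterior obtained from the reduced density $p_S(\cdot|\theta)$ against the \emph{same} prior $\Pi^r$ genuinely coincides with the posterior $\Pi^r_x$ of the original model, and that the frequentist probability is unchanged when integration is transported from $\X$ to $S(\X)$ via the pushforward of $p(\cdot|\theta_0)$. Both are consequences of the factorization together with the measurability of $S$, so no invariance machinery beyond Theorem~\ref{th_lebesgue} is required.
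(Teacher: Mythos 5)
Your proposal is correct and follows essentially the same route as the paper: both use sufficiency to factor the density (the paper via $p_{X|\theta}(x) = p_{X|S(X)}(x|S(x))\,p_{S(X)|\theta}(S(x))$, you via the equivalent Neyman--Fisher factorization), cancel the $\theta$-free factor inside the posterior-probability event, and then apply Theorem~\ref{th_lebesgue} to the reduced model on $S(\X)$. Your explicit verification that the posterior and the frequentist integral transport correctly from $\X$ to $S(\X)$ is bookkeeping the paper leaves implicit (it only records the condition $p_{X|S(X)}(x_0|S(x_0))>0$), so your write-up is, if anything, slightly more complete.
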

The proof follows the proof of the theorem \ref{th_lebesgue} in the Appendix 3.

By evaluating $\zeta=1$ in the result, the PLR with $\zeta=1$ is easily and finally shown to be equal to a frequentist p-value, where the test statistic is a weighted marginal likelihood of the sufficient statistic $S(x)$.
\begin{corollary}
\label{cor_pval}
Under the assumptions of the corollary \ref{th1_suff}, the PLR with inner threshold $\zeta=1$ is equal to a p-value:
\begin{equation}
\PLR(x_0,1) = \pval\big(T(x_0)\big)
\end{equation}
with the test statistic 
\begin{equation}
T(x) = \frac{p_S(S(x)|\theta_0)}{\pi^l(S(x))} \label{statInvar} 
\end{equation}
\end{corollary}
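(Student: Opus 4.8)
The plan is to obtain this statement as the immediate specialization of Corollary~\ref{th1_suff} at the value $\zeta=1$, followed by matching the resulting expression with the frequentist p-value introduced in equation~(\ref{eq_pval_f}).

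First I would invoke Corollary~\ref{th1_suff} under its stated assumptions and simply set $\zeta=1$ in its conclusion. The multiplicative threshold then disappears and the frequentist integral collapses to
\begin{equation*}
\PLR(x_0,1) = \Pr\left(\frac{p_S(S(x_0)|\theta_0)}{\pi^l(S(x_0))} \le \frac{p_S(S(x)|\theta_0)}{\pi^l(S(x))} \mid \theta_0\right).
\end{equation*}
Introducing the test statistic $T$ of equation~(\ref{statInvar}), namely $T(x) = p_S(S(x)|\theta_0)/\pi^l(S(x))$, this reads exactly $\Pr(T(x_0) \le T(x) \mid \theta_0)$, which by the definition in equation~(\ref{eq_pval_f}) is the p-value $\pval(T(x_0))$ associated to the statistic $T$.

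Since the result is a direct corollary, there is no substantial obstacle; the only point requiring care is the distinction between the strict inequality in the p-value definition~(\ref{eq_pval_f}) and the non-strict inequality inherited from the PLR. This is settled by a measure-zero argument: by assumption $p_S(.|\theta_0)$ is a density with respect to the Lebesgue measure on $S(\X)$ and $\pi^l$ is an absolutely continuous density, so $T$ has a continuous distribution under $\theta_0$ and the boundary event $\{T(x)=T(x_0)\}$ carries zero probability. Hence the strict and non-strict inequalities define the same probability, which identifies $\PLR(x_0,1)$ with $\pval(T(x_0))$ and closes the argument. I would also note, in passing, that the rejection conventions are consistent: a large value of $T(x_0)$ makes $\PLR(x_0,1)$ small, matching the convention stated after equation~(\ref{eq_pval_f}) by which $\Hzero$ is rejected for large p-values.
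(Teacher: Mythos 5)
Your proposal is correct and follows essentially the same route as the paper: specialize the frequentist integral of Corollary~\ref{th1_suff} at $\zeta=1$, read off the statistic $T(x) = p_S(S(x)|\theta_0)/\pi^l(S(x))$, and identify the result with the p-value of equation~(\ref{eq_pval_f}); the paper merely packages the same step through the cumulative distribution $F_{T(X)|\theta_0}$, writing $\PLR(x_0,\zeta) = 1 - F_{T(X)|\theta_0}\big(\zeta^{-1}T(x_0)\big)$ before setting $\zeta=1$. Your explicit handling of the strict versus non-strict inequality is a point the paper passes over silently, and is welcome, though note that absolute continuity of the distribution of $S(X)$ alone does not force $T(X)$ to be atomless (the level sets of the density ratio must be null sets, which holds in the regular cases considered but not, say, for a uniform family).
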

The corollary \ref{cor_pval} can be reexpressed as the fact that under the invariance assumptions, rejecting $\Hzero$ when $\PLR(x_0,1) > p$ is equivalent to rejecting $\Hzero$ when $\pval\big(T(x_0)\big) > p$ where the p-value is based on the idea of rejecting $\Hzero$ when $T(x_0)$ defined in equation (\ref{statInvar})
(observed weigthed likelihood under $\Hzero$) is not large enough.

\subsection{Examples and perspective}
\label{sec_ex_pers}

\citet{dempster74} has shown that the PLR is equal to the classical p-value associated to the test statistic $T(x) = |\bar x-\theta_0|$ when testing the mean of a normal family for $X$ with a uniform prior on $\Theta$. The corollary \ref{cor_pval} extends this result since the normal family is one of the distributions invariant under translation when testing the location parameter, the uniform prior (i.e. Lebesgue measure) is the measure induced from the right Haar measure associated to translation, and the test statistic $T(.)$ is a monotone function of $p_S(S(.)|\theta_0)\pi^l(S(.))^{-1}$ since the translation (sum) is commutative, so that $\Delta(g)=1$ for all $g\in\G$ and so $\pi^l$ is constant. 

The result proved here concerns all distributions invariant under some group transformation, under the assumptions that there exists a sufficient statistic and that the sets $\G$, $S(\X)$ and $\Theta$ are isomorphic. Assume for example that the likelihood $p_S$ has the typical form $p_S(S(x)|\theta) = \theta^{-1}f\big(S(x)\theta^{-1}\big)$. The likelihood is invariant under the scale transformation $g(S(x)) = \alpha \times S(x)$ and the actions on $S(\X)$ and $\Theta$ are identical. Note that $Uf(U)$ with $U = S(X)\theta^{-1}$ is a pivotal quantity, meaning that its distribution does not depend on $\theta$. The induced prior measure is classically given by $\Pi^r(d\theta) \propto \theta^{-1}d\theta$. Since the multiplication transformation is commutative, the modulus $\Delta$ is uniformly equal to 1, so that the test statistic that appears in the p-value (corollary \ref{cor_pval}) is simply $T(x) = S(x)\theta_0^{-1}f\big(S(x)\theta_0^{-1}\big)$ where $\theta_0$ is the value of the parameter under $\Hzero$. For a more general insight into the relationship between Haar invariance and the Fisher pivotal theory, see \citet{eaton99}. 

The theorem \ref{th1} assumes that $\G$, $\X$ and $\Theta$ are isomorphic. This assumption is relaxed in the corollaries \ref{th1_suff} and \ref{cor_pval} where the sample $X$ is replaced by a sufficient statistic $S(X)$: $\G$, $S(\X)$ and $\Theta$ are assumed to be isomorphic. This trick is one of the two classical dimensionality reduction techniques concerning Haar measures applied to statistical problems and somehow restricts the likelihood to belong to the exponential family from Darmois theorem. The second trick consists schematically in replacing $S(\X)$ by the orbit of $\G$ associated to the observed dataset $O_{x_0} = \{gx_0 \mid g\in\G\} \subset\X$. However, the whole set of assumptions that would be involved is more technical, see for example the general assumptions made by \citet{zidek69} or \citet{eaton02}, and not investigated here. 

\subsection{Connection to other Bayesian and frequentist reconcilations}
\label{sec_connect}

The result, which concerns hypothesis testing, may be related to the different approaches used to reconcile frequentist and Bayesian point estimation somehow and confidence interval especially. 

Group invariance applied to invariant inference is the classical frame of such unifications. The Fisherian pivotal theory (\citet{fisher73}) is an important contribution mainly to the ``frequentist'' side and the right Haar measure to the ``Bayesian'' side. The reconciliation of the two approaches has started with \citet{fraser61} and has been deeply studied since then, by \citet{zidek69} for example. The most general stage of unification is reached by \citet{eaton99}. They explicit the central hypothesis of the Fisherian pivotal theory and show under quite standard assumptions in invariance that this hypothesis leads to a procedure which is identical to the Bayesian invariant procedure when using the prior induced by the right Haar measure. Note that they also show (and in a more general manner by \citet{eaton02}) that for a Bayesian invariant inference to be admissible (in the sense that there exists no invariant inference whose mean quadratic error is lower for all $\theta$) it has to be obtained from the right Haar prior.

More concretely, the question related to reconciled probability domains is: ``Under what assumptions does the following equality hold?''
\begin{align}
\Pr\big(\theta\in \mathcal R(x)\big|x\big) &= \Pr\big(\theta\in \mathcal R(x)\big|\theta\big)  \\
\text{i.e. }~ \int_{\{\theta\in \mathcal R(x)\}} \hspace{-1.cm} d\theta ~ \pi(\theta|x) &= \int_{\{x \mid \theta\in \mathcal R(x)\}} \hspace{-1.4cm}dx ~ p(x|\theta)  \nonumber
\end{align}
For the equality to hold, each probability needs to be a constant. After \citet{fraser61} initial work, \citet{stein65} sketched the first conditions of what would be called later Stein's theorem for invariant domains. The part which is common to the different ``Stein's theorems'' is the following:
\begin{quote}{\it
 If a domain $\mathcal R(x)\subset\Theta$ satisfies $\bar g\mathcal R(x) = \mathcal R\big(g(x)\big)$
with $\bar g\mathcal R(x) = \{\bar g\theta \mid \theta\in \mathcal R(x)\}$, then under [\textit{some invariance assumptions}],
\begin{align}
\Pr\big(\theta\in \mathcal R(x)\big|x\big) &= c ~~ \forall x \in \mathcal X ~~ \text{(Bayesian probability)} \nonumber \\
\text{and } \Pr\big(\theta\in \mathcal R(x)\big|\theta\big) &= c ~~ \forall \theta \in \Theta ~~ \text{(frequentist probability)} \nonumber
\end{align}}  
\end{quote}
One of the simplest set of assumptions found since \citet{stein65} is the one of \citet{chang86}. It is relatively close to the one used for our results, presented in the section \ref{sec_equi}.

Our result, mainly holding in the theorem \ref{th_lebesgue}, is not a consequence of Stein's theorem because the domain $\mathcal R(x)\subset\Theta$ is not invariant in our case. $\mathcal R(x)$ would be invariant only if $\theta_0$ was invariant under the transformations group $\mathcal G$, i.e. if $\bar g\theta_0 = \theta_0$ for all $\bar g$ (this is equivalent to assuming that $H_0$ is invariant under $\mathcal G$). But in the theorem \ref{th1}, expressed and proved in the appendix 2 and used in the appendix 3 to prove the theorem \ref{th_lebesgue}, $\phi_{\theta}$ is assumed to be one-to-one for all $\theta\in\Theta$, which implies that $\bar g\theta_0 = \theta_0$ is equivalent to $\bar g=e$ (identity function). So the domain $\mathcal R(x)\subset\Theta$ is not invariant in our case and Stein's theorem does not imply the reconciliation result presented in the section \ref{sec_equi}. 



The theorem \ref{th_lebesgue} does not answer the previous question, but rather relaxes the form of the domain and accepts a procedure that varies according to the observed dataset $x_0$ and the value of the parameter $\theta_0$ under $\Hzero$. It answers to the question: ``Under what assumptions and for what domains $\mathcal R$ and $\mathcal C$ does the following equality hold?''
\begin{align}
 \int_{\mathcal R(x_0,\theta_0)\subset\Theta} d\theta ~ \pi(\theta|x_0) &= \int_{\mathcal C(x_0,\theta_0)\subset\X} dx ~ p(x|\theta_0) 
\end{align}
The domains found take the form
\begin{align*}
\mathcal R(x_0,\theta_0) &= \{\theta \mid p(x_0|\theta_0) \le p(x_0|\theta)\} \\
\mathcal C(x_0,\theta_0) &= \{x \mid p(x_0|\theta_0)f(x_0) \le p(x|\theta_0)f(x)\}
\end{align*}
where $f(x)$ is some weighting function, actually given by the inverse of the left prior induced by the underlying group.

\section{PLR for composite vs composite hypotheses testing}
\label{sec_generalization}

Up to this section, the PLR has been only defined in the simple ($\Hzero: \theta=\theta_0$) vs composite case, ie according to \citet{dempster74}'s first definition. 

For the more general hypothesis $\Hzero: \Psi(\theta)=\psi_0$ presented at the end of the section \ref{sec_class_test}, Dempster's approach has been generalized by \citet{aitkin97}, with a modification presented by \citet{aitkin10}. Namely, \citet{aitkin10} proposes to compute 
$\Pr\left(p(x|\theta) < p\left(x|(\Psi,\Lambda)^{-1}(\psi_0,\Lambda(\theta))\right) \big| x\right)$
and details and illustrates some advantages of the method. In the case of $\Psi(\theta)=\theta$, it corresponds to Dempster's definition (see page 42 of \citet{aitkin10}). The approach of \citet{evans97} also carries interesting properties. In particular, a variety of optimality properties for inferences based on relative belief ratios are established in \citet{evans06}, \citet{evans08} and \citet{evans11}, which include optimal testing properties based on establishing a kind of Bayesian version of the Neyman-Pearson lemma. 

However, the hypotheses test case on which they rely is not broad enough for many cases. The purpose of this section is to extend the definition of the PLR to the classical composite vs composite hypotheses test. 

Suppose the data models related to the two hypotheses belong to the same parametric family $\mathcal P_{\Theta} = \{p(.|\theta),\theta\in\Theta\}$. This assumption can actually be realized for any hypotheses test of parametric models by merging the tested parametric families in a so-called \textit{super-model}. A composite vs composite hypotheses test consists in choosing among
\begin{align}
\Hzero : \theta\in\Theta_0 ~~~~~ \Hone : \theta\in\Theta_1
\end{align}
for any domains $\Theta_0$ and $\Theta_1$. We note $\Pi_0(.)$ and $\Pi_1(.)$ the prior distributions over $\Theta_0$ and $\Theta_1$.

In this section we propose two extensions of Dempster's approach for this test case. The first extension proposed can be used when the prior under one hypothesis is improper but both posteriors are proper. The second extension, made of two symmetrical probabilities, is the statistics suggested by a new Bayesian-type Neyman-Pearson lemma which also indicates that the LR is a central discrepancy variable.

\subsection{Extensions of the PLR}
\label{sec_ext_plr}

In the simple $\Theta_0=\{\theta_0\}$ vs composite hypotheses test, the PLR was primarly defined as 
\begin{equation}
\PLR(x,\zeta) = \int_{\{\theta_1 \mid p(x|\theta_1) < \zeta p(x|\theta_0)\}} \Pi_1(d\theta_1|x)\nonumber
\end{equation}

In the composite vs composite hypotheses test, a first interesting extension of this concept consists in defining  the following statistics:
\begin{align}
\PLR_{01}(x,\zeta) &= \int_{\{(\theta_0,\theta_1) \mid p(x|\theta_0) < \zeta p(x|\theta_1)\}} \Pi_0(d\theta_1|x)\Pi_1(d\theta_0|x) \label{eq_plr01}
\end{align}
It is well defined as soon as the posterior distributions are both proper. Since only $x$ is known, the event $p(x|\theta_0) < \zeta p(x|\theta_1)$ can be measured only by integrating over all $\theta_0\in\Theta_0$ and all $\theta_1\Theta_1$. Here we decide to measure it according to the posterior distribution of $\theta_0$ times the posterior distribution of $\theta_1$, which is perfectly allowed.

A second interesting extension of the simple PLR consists in defining  the two symmetrical following statistics:
\begin{align}
\PLR_0(x,\zeta) &= \int_{\{(\theta_0,\theta_1) \mid p(x|\theta_1) < \zeta p(x|\theta_0)\}} \Pi_0(d\theta_0|x)\Pi_1(d\theta_1) \label{eq_plr0}\\
\PLR_1(x,\zeta) &= \int_{\{(\theta_0,\theta_1) \mid p(x|\theta_0) < \zeta p(x|\theta_1)\}} \Pi_1(d\theta_1|x)\Pi_0(d\theta_0) \label{eq_plr1}
\end{align}

In the simple vs composite test, note that only $\PLR_{01}(x,\zeta)$ and $\PLR_1(x,\zeta)$ are equal to the PLR as defined by \citet{dempster74} and can thus be considered as extensions of the PLR. However, given the symmetry of the two hypotheses in a composite vs composite test, the notation $\PLR_0(x,\zeta)$ will be also necessary in the sequel.

Each quantity has its own definition, interpretation, properties and field of use. We don't investigate interpretation far here, and rather focus on unquestionable properties and results.

$\PLR_{01}(x,\zeta)$ is the only extension of the two which allows for using improper priors. It will be illustrated in the next subsection to test a practical precipitation change, which requires the use of a prior which is too smooth for the other extension to be used. 

On the other side, the statistics $\PLR_1(x,1)$ is the expectation over the prior under $H_0$ of the posterior probability under $H_1$ that the likelihood of $\theta_0$ is less than the likelihood of $\theta_1$, and reciprocally.
\begin{equation}
\PLR_1(x,\zeta) = \esp_0[\Pr_1(p(x|\theta_0) < \zeta p(x|\theta_1) | x)]  \nonumber
\end{equation}
$\PLR_0$ and $\PLR_1$ will appear as statistics emerging from a more general frame through a Bayesian-type Neyman-Pearson lemma.

Extending the interpretation of the new PLRs in terms of joint probabilities requires the definition of a measure over $\Theta_0\times\Theta_1$ given $x$ and one of the two hypotheses. Such a measure seems to make sense in terms of both mathematics and interpretation but the issue needs to be deepened. 
\begin{remark}
\label{lem_joint_meas}
If all subsets defined on the sets $\Theta_0\times\X | H_0$ and $\Theta_1 | H_0$ are independent, then the joint measure $\Pi_{01,0}$ defined over $\Theta_0\times\Theta_1\times\X | H_0$ is equal to:
\begin{equation} 
\Pi_{01,0}(d\theta_0,d\theta_1 | x) = \Pi_0(d\theta_0 | x) \Pi_1(d\theta_1)  \nonumber
\end{equation}
 for infinitesimal subsets around any $(\theta_0,\theta_1) \in \Theta_0\times\Theta_1$. The same holds when replacing the roles of $H_0$ and $H_1$, and leads to the measure $\Pi_{01,1}$:
\begin{equation}
\Pi_{01,1}(d\theta_0,d\theta_1 | x) = \Pi_0(d\theta_0) \Pi_1(d\theta_1 | x)  \nonumber
\end{equation}
\end{remark}

The proof of the remark stands in the appendix 4. So if we assume that the joint measures exist and that the priors and posteriors are all proper, then the composite PLRs defined in the equations (\ref{eq_plr0}) and (\ref{eq_plr1}) are probability measures.



\subsection{Example: detection of a change in precipitation in Switzerland}
\label{sec_examp}

Let's illustrate $\PLR_{01}$ defined in the equation (\ref{eq_plr01}). 

Although the change in temperature in the 20th century is evident at a world scale and in some areas, a potential change in precipitation remains under study. As a simple case, let's consider a single weather station in Switzerland and test whether the statistical properties of the rain frequency have changed. 

As recalled for example by \citet{aksoy00}, daily precipitation amounts are well described by a gamma distribution, characterized by a shape parameter $a$ and a rate parameter $b$. Assume the daily rainfalls $x_1$ fallen during the five first automns of the 20th century are i.i.d. with parameters $a_1$ and $b_1$, as well as $x_2$ during the five last automns with parameters $a_2$ and $b_2$. The detection of a statistical change consists in testing whether the set of parameters are equal or not:
\begin{align}
\Hzero : (a_1,b_1) = (a_2,b_2) ~~~~~ \Hone : (a_1,b_1) \neq (a_2,b_2)
\end{align}

Note that the dimension of $\Theta_0 = \mathbb R_{+*}^2$ is less than the dimension of $\Theta_1 = \mathbb R_{+*}^4$, so that for a regular prior under $\Hone$, $\Pr_1(\theta\in\Theta_0)=0$. \citet{borges07} are particularly interested by the behavior of the e-value of the FBST in such cases. Here it simply means that there is one prior $\pi(a,b)$ under $\Hzero$ and the product of two priors $\pi(a_1,b_1)\times\pi(a_2,b_2)$ under $\Hone$, to be combined respectively with the likelihood $p(x_1,x_2|a,b)$ under $\Hzero$ and the likelihood $p(x_1|a_1,b_1)\times p(x_2|a_2,b_2)$ under $\Hone$. 

To enable simple simulations of the posterior distributions under both hypotheses, the conjugate prior (see the compedium by \citet{fink97}) of the gamma distribution developed by \citet{miller80} is used for $\pi$, with hyperparameters that may vary without affecting much the final results. The impact of the prior on the PLR is easy to see from the PLR display as will be explained very shortly. In practice, the prior $\pi$ is almost improper so that only the $\PLR_{01}$ defined in equation (\ref{eq_plr01}) can be used. 

First, simulations roughly corresponding to the observed rainfall are performed. One dataset is simulated under $H_0$ and another is simulated under some reasonably similar alternative $H_1$. The two simulated datasets are characterized by their likelihoods, displayed on the figure \ref{fig_like}.

\begin{figure}
\begin{center}
\includegraphics[width=0.33\columnwidth]{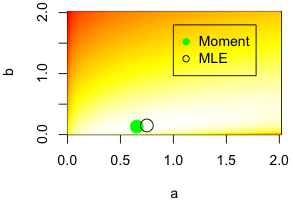}\includegraphics[width=0.33\columnwidth]{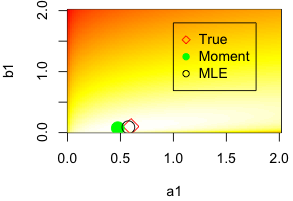}\includegraphics[width=0.33\columnwidth]{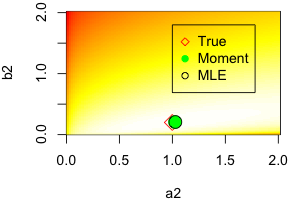}
\caption{Log-likelihood of the dataset under $\Hzero$ (left figure) and the dataset under $\Hone$(center and right figures). Some frequentist estimations of the parameters (circles) are superimposed on the true parameters values (diamond).\label{fig_like}}
\end{center}
\end{figure}

The posterior distribution of each couple $(a,b)$, $(a_1,b_1)$ and $(a_2,b_2)$ is separately sampled by a MCMC multivariate slice sampling algorithm (\citet{radford03}) implemented in the R package ``SamplerCompare'' kindly written and provided by \citet{thompson12}. The PLR is simply computed by ordering the LR obtained for all possible combinations of parameters and counting the fraction which is less than some threshold $\zeta$ chosen according to the level of evidence wanted in favor of $\Hzero$ or $\Hone$. In practice, the PLR is displayed as a function of $\zeta$ by simply displaying the empirical cumulative distribution of the LRs. This leads to the figure \ref{fig_plr_simu}. It can be read for example that for the dataset under $\Hzero$, $\PLR_{01}(x,0.1) = 0.08$, which means that there is an almost null probability that the likelihood under $\Hone$ is more than 10 times greater than the likelihood under $\Hzero$, so that $\Hzero$ is (correctly) clearly accepted. Alternatively, for the $\Hone$ dataset, $\PLR_{01}(x,0.1) = 1.00$, meaning that there is a probabity one that the likelihood under $\Hone$ is more than 10 times greater than the likelihood under $\Hzero$, so that $\Hzero$ is (correctly) clearly rejected. 

Note that since the GLR indicates the lower bound of the support of the PLR and since the slope of the PLR is infinite there if the likelihood function is smooth enough at its maximum (see section \ref{sec_intro_plr}), the prior exact expression only affects the way $\PLR_{01}(x,\zeta)$ increases as $\zeta$ departs from $\GLR$. Here for example the choice of the hyperparameters (among a domain considered as \textit{reasonable}) does not change the conclusion that would be drawn from the PLR displayed on the figure \ref{fig_plr_simu}.

\begin{figure}
\begin{center}
\includegraphics[width=0.4\columnwidth]{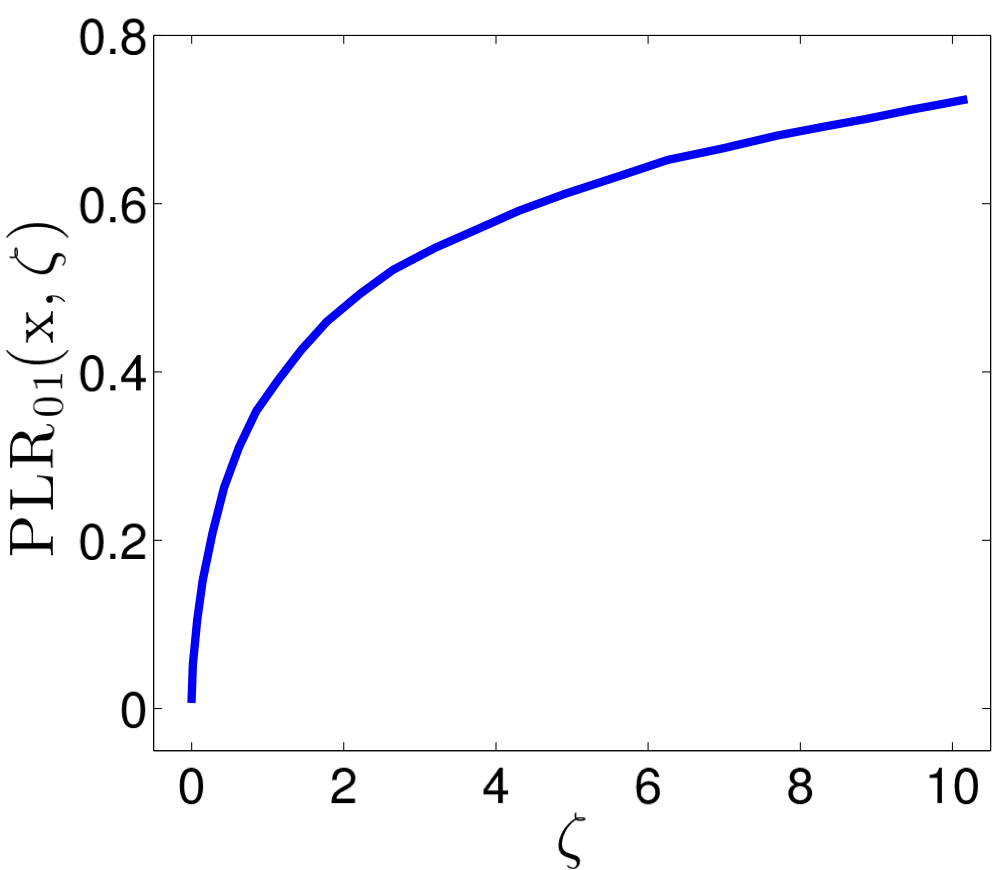}\quad \includegraphics[width=0.4\columnwidth]{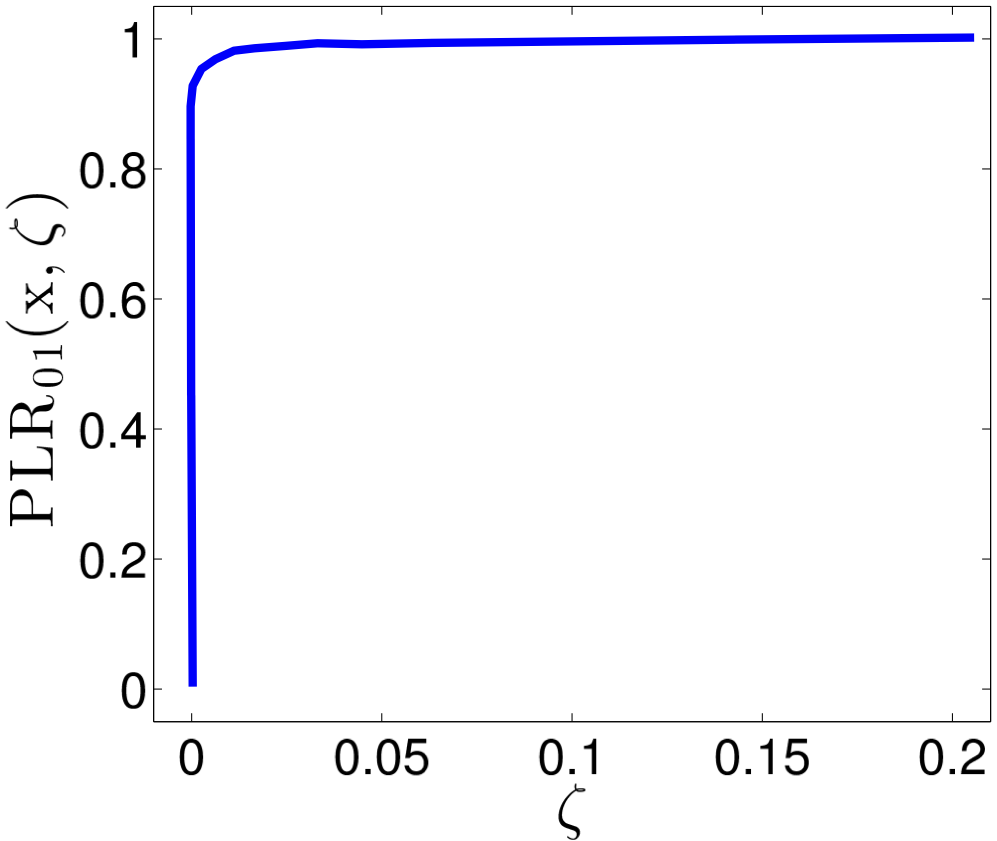}

\caption{PLR obtained from the dataset simulated under $\Hzero$ (left) and the dataset simulated under $\Hone$ (right). In practice those are simply the empirical cumulative distributions of the $\LR(x,\theta^{[i]})$ chains. For the $\Hzero$ dataset the PLR clearly correctly accepts $\Hzero$, and reciprocally for the $\Hone$ dataset the PLR clearly correctly rejects $\Hzero$: notice the difference of the x axes scales between both simulation cases. \label{fig_plr_simu}}
\end{center}
\end{figure}

Switching to the true dataset $x$, the PLR is obtained following the same procedure as with the simulated datasets. $\PLR_{01}(x,\zeta)$ is displayed on the figure \ref{fig_plr_obs}. The graph is --by construction of the simulations-- very similar the one obtained for the graph obtained with the data simulated under $\Hzero$. Now, $\PLR_{01}(x,0.1) = 0.10$ and $\Hzero$ can clearly not be rejected, so that no change in the 20th precipitation in Switzerland is detected, which is not surprising to climatologists.

\begin{figure}
\begin{center}
\includegraphics[width=0.5\columnwidth]{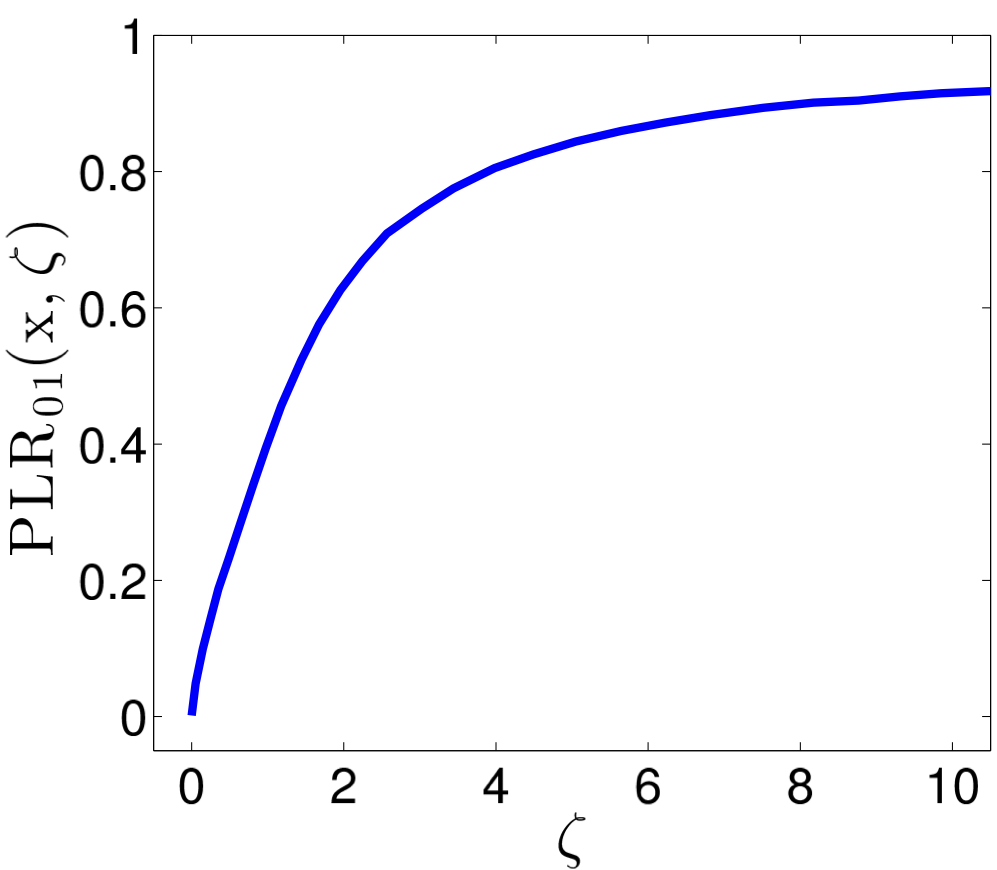}
\caption{PLR obtained from daily automnal precipitation observed in a weather station in Switzerland from 1900-1905 for the first part of the dataset and 1995-2000 for the second part of the dataset. $\Hzero$ cannot be rejected, so that no change in precipitation is detected.\label{fig_plr_obs}}
\end{center}
\end{figure}

\subsection{Bayesian type Neyman-Pearson lemma}
\label{sec_bayes_neyman}

In the choice of an hypothesis, instead of considering the subset 
\begin{equation}
\mathcal{R}^{\ast}(x) = \{(\theta_0,\theta_1) \mid p(x|\theta_0) < \zeta p(x|\theta_1)\}  \label{eq_set_lr}
\end{equation}
one might consider any subset $\mathcal{R}(x)\subset\Theta_0\times\Theta_1$, that may depend on $x$. Such a subset could involve a discrepancy variable $D : \X\times\Theta \mapsto \R$ like in the predictive p-value highlighted by \citet{meng94}, and take the form ``$D(x,\theta_0) < \zeta D(x,\theta_1)$''. The discrepancy variable that appears in the PLR is $\LR(x,\theta)$.

$\mathcal{R}^{\ast}(x)$ defined from the LR test is interesting for hypotheses testing because this set is a somehow classical \textit{hypothesis} rejection set. It is not a fully classical rejection set because it is defined on the parameter space rather than on the observation space, but its characterization is optimal in the frequentist setting. $\mathcal{R}^{\ast}(x)$ is the set, depending on the dataset $x$, of all fixed $(\theta_0,\theta_1) \in \Theta_0\times\Theta_1$ such that the likelihood of $\theta_0$ is less than the likelihood of $\theta_1$, which reasonably leads to reject $H_0$ for this element $(\theta_0,\theta_1)$. The same way, one can replace the LR test by any test, ie consider any subset $\mathcal{R}(x)\subset\Theta_0\times\Theta_1$ such that for $(\theta_0,\theta_1)\in\mathcal{R}(x)$, $H_0$ would be decided to be rejected. 

With such a phrasing, it may appear natural that the frequentist Neyman-Pearson lemma can be derived in a reciprocal, somehow Bayesian, frame. Note that the Neyman-Pearson lemma can be expressed, as will be the proposition here, symmetrically in the two hypotheses. The symmetry is only broken when adopting the Neyman paradigm which fixes a level for the PFA and deduce the corresponding $\zeta$ (see section \ref{sec_class_test}).

To rederive a Neyman-Pearson lemma one would define the reciprocal notions of ``Probability of False Alarm'' and ``Probability of good Detection'':
\begin{align}
\PFA_B(\mathcal{R},x) &= \int_{\mathcal{R}(x)} \Pi_0(d\theta_0|x)\Pi_1(d\theta_1) \label{eq_pfab} \\
\PD_B(\mathcal{R},x) &= \int_{\mathcal{R}(x)} \Pi_1(d\theta_1|x)\Pi_0(d\theta_0) \label{eq_pdb}
\end{align} 
These quantities would also define probability measures if the joint measures exist and if the priors and posteriors are all proper.

Note that these measures can be related to a joint measure with no conditioning over the hypothesis: for any set $\mathcal{R}(x) \subset \Theta_0\times\Theta_1$ eventually depending on $x$,
\begin{align}
\Pr(\mathcal{R},x) &= \Pr(H_0) ~\Pr(\mathcal{R} | x,H_0) +  \Pr(H_1)~\Pr(\mathcal{R} | x,H_1)\nonumber \\
\text{with }~ \Pr(\mathcal{R} | x,H_i) &= \int_\mathcal{R} \Pi_{01,i}(d\theta_0,d\theta_1|H_i,x) \nonumber\\
  &= \int_\mathcal{R} \Pi_i(d\theta_i|x) \Pi_j(d\theta_j) \nonumber \\
\text{so }~  \Pr(\mathcal{R} | x) &= \Pr(H_0)  \int_\mathcal{R} \Pi_0(d\theta_0|x)\Pi_1(d\theta_1) +  \Pr(H_1) \int_\mathcal{R} \Pi_1(d\theta_1|x)\Pi_0(d\theta_0) \nonumber\\
&= \Pr(H_0) ~\PFA_0(\mathcal{R},x) +  \Pr(H_1) ~\PD_1(\mathcal{R},x) \nonumber
\end{align}
The Bayesian type probabilities $\PFA_B$ and $\PD_B$ add up the same way type I and type II probability errors add up in a frequentist integral.  Note also that $\PFA_B(\bar{\mathcal{R}},x) = 1 - \PFA_B(\mathcal{R},x)$ where $\bar{\mathcal{R}}(x)$ is the set complementary to $\mathcal{R}(x)$ in $\Theta_0\times\Theta_1$.


Following the underlying idea of the Neyman-Pearson approach, a possibility for choosing $\mathcal{R}(x)$ consists in maximizing $\PD_B(\mathcal{R},x)$ over $\mathcal{R}(x)$ for a fixed $\PFA_B(\mathcal{R},x)$. 
\begin{proposition} \label{lemma_np_b}
The subset that maximizes $\PD_B(\mathcal{R},x)$ for a fixed value of $\PFA_B(\mathcal{R},x)$ is equal to the LR subset $\mathcal{R}^{\ast}(x)$ defined in equation (\ref{eq_set_lr}). In this case, the ``Bayesian PFA and PD'' are given by $\PFA_B(\mathcal{R}^{\ast},x) = 1-\PLR_0(x,\zeta)$ and $\PD_B(\mathcal{R}^{\ast},x) = \PLR_1(x,\zeta)$.\\
Reciprocally, the subset that maximizes $\PFA_B(\mathcal{R},x)$ for a fixed value of $\PD_B(\mathcal{R},x)$ is equal to $\bar{\mathcal{R}^{\ast}}(x)$, ie the set which accepts $H_0$ according to the LR test. In this case, $\PFA_B(\bar{\mathcal{R}}^{\ast},x) = \PLR_0(x,\zeta)$  and $\PD_B(\bar{\mathcal{R}}^{\ast},x) = 1-\PLR_1(x,\zeta)$. 
\end{proposition}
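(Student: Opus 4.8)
The plan is to recognise the proposition as a measure-theoretic Neyman--Pearson statement and to reduce it to the classical lemma by expressing both $\PFA_B$ and $\PD_B$ as integrals against one common measure. First I would use Bayes' rule to factor the posteriors as $\Pi_i(d\theta_i|x)=p(x|\theta_i)\,\Pi_i(d\theta_i)/m_i(x)$, where $m_i(x)=\int_{\Theta_i}p(x|\theta_i)\,\Pi_i(d\theta_i)$ is the marginal likelihood, assumed finite and positive so that the priors and posteriors are proper (the standing hypothesis under which the four quantities are genuine probability measures). Substituting this into (\ref{eq_pfab}) and (\ref{eq_pdb}) rewrites both as integrals over $\Theta_0\times\Theta_1$ against the product prior $\mu(d\theta_0,d\theta_1)=\Pi_0(d\theta_0)\,\Pi_1(d\theta_1)$,
\begin{align*}
\PFA_B(\mathcal{R},x) &= \int_{\mathcal{R}(x)} \frac{p(x|\theta_0)}{m_0(x)}\,\mu(d\theta_0,d\theta_1), & \PD_B(\mathcal{R},x) &= \int_{\mathcal{R}(x)} \frac{p(x|\theta_1)}{m_1(x)}\,\mu(d\theta_0,d\theta_1),
\end{align*}
so that both the constraint and the objective are integrals of the fixed nonnegative densities $g_0\propto p(x|\theta_0)$ and $g_1\propto p(x|\theta_1)$ over a region $\mathcal{R}(x)$ we are free to choose.

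Next I would invoke the Neyman--Pearson lemma in this form: among all $\mathcal{R}$ with $\PFA_B$ fixed, the one maximising $\PD_B$ is the super-level set $\{g_1>\tilde k\,g_0\}$ of the ratio, with $\tilde k$ calibrated to the constraint. Since $g_1/g_0\propto p(x|\theta_1)/p(x|\theta_0)$, this super-level set is precisely $\{p(x|\theta_0)<\zeta\,p(x|\theta_1)\}=\mathcal{R}^{\ast}(x)$ of (\ref{eq_set_lr}), the constant $\tilde k\,m_1(x)/m_0(x)$ being absorbed into the threshold $\zeta$. Rather than quote the lemma, I would give the standard one-line argument: for any competitor $\mathcal{R}$ the pointwise inequality $(\mathbf 1_{\mathcal{R}^{\ast}}-\mathbf 1_{\mathcal{R}})(g_1-\tilde k\,g_0)\ge 0$ holds (both factors share their sign on $\mathcal{R}^{\ast}$ and on its complement), and integrating it against $\mu$ gives $\PD_B(\mathcal{R}^{\ast},x)-\PD_B(\mathcal{R},x)\ge \tilde k\big(\PFA_B(\mathcal{R}^{\ast},x)-\PFA_B(\mathcal{R},x)\big)$, which is nonnegative once $\PFA_B$ is held fixed. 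The reciprocal claim, that $\bar{\mathcal{R}^{\ast}}$ maximises $\PFA_B$ at fixed $\PD_B$, follows by exchanging the roles of the two hypotheses, i.e. running the same inequality with $g_0$ and $g_1$ swapped.

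It then remains to read off the four numerical values by matching integration regions and measures to the definitions. The region $\mathcal{R}^{\ast}(x)$ and the measure $\Pi_1(d\theta_1|x)\,\Pi_0(d\theta_0)$ carried by $\PD_B$ coincide verbatim with those of $\PLR_1(x,\zeta)$ in (\ref{eq_plr1}), giving $\PD_B(\mathcal{R}^{\ast},x)=\PLR_1(x,\zeta)$ at once, whence $\PD_B(\bar{\mathcal{R}^{\ast}},x)=1-\PLR_1(x,\zeta)$ by complementarity of the probability measure. Likewise $\PFA_B$ carries the measure $\Pi_0(d\theta_0|x)\,\Pi_1(d\theta_1)$ of $\PLR_0$ in (\ref{eq_plr0}); identifying the region of $\PLR_0$ with the acceptance region $\bar{\mathcal{R}^{\ast}}(x)$ up to the indifference boundary $\{p(x|\theta_0)=\zeta\,p(x|\theta_1)\}$, which is $\mu$-null for continuous likelihoods, yields $\PFA_B(\bar{\mathcal{R}^{\ast}},x)=\PLR_0(x,\zeta)$ and hence $\PFA_B(\mathcal{R}^{\ast},x)=1-\PLR_0(x,\zeta)$.

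The main obstacle is the Neyman--Pearson step itself rather than this bookkeeping: one must guarantee the existence of a threshold $\tilde k$ realising the prescribed level of $\PFA_B$, which in general demands a randomised boundary on $\{g_1=\tilde k\,g_0\}$ when the posterior law of the likelihood ratio has atoms, and one must track carefully how the single threshold $\zeta$ of $\mathcal{R}^{\ast}$ lines up with the inner thresholds of $\PLR_0$ and $\PLR_1$ and with the strict/non-strict inequalities defining $\mathcal{R}^{\ast}$ and $\bar{\mathcal{R}^{\ast}}$. For continuous likelihoods the indifference boundary is null, no randomisation is needed, and all these regions agree almost everywhere, so the exact equalities stated in the proposition hold.
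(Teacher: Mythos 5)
Your central optimality argument is the same as the paper's (Appendix 5): the paper likewise writes the pointwise sign inequality $\left(I_{\mathcal{R}^{\ast}(x)}(\theta_0,\theta_1)-I_{\mathcal{R}(x)}(\theta_0,\theta_1)\right)\left(p(x|\theta_0)-\zeta p(x|\theta_1)\right)\le 0$, integrates it against $\Pi_0(d\theta_0)\Pi_1(d\theta_1)/(m_0(x)m_1(x))$, and uses Bayes' rule $\Pi_i(d\theta_i|x)=p(x|\theta_i)\Pi_i(d\theta_i)/m_i(x)$ to recognize the differences of $\PFA_B$ and $\PD_B$; whether Bayes' rule is applied before or after stating the inequality is cosmetic. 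Your worry about randomisation is also moot here: the paper explicitly restricts to deterministic tests, and the proposition only asserts optimality of $\mathcal{R}^{\ast}(x)$ at the level $\PFA_B(\mathcal{R}^{\ast},x)$ that it itself attains, so no existence-of-threshold calibration is needed.

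The one genuine flaw is in your bookkeeping for $\PLR_0$. You identify the region $\{p(x|\theta_1)<\zeta p(x|\theta_0)\}$ defining $\PLR_0(x,\zeta)$ with the acceptance set $\bar{\mathcal{R}}^{\ast}(x)=\{p(x|\theta_0)\ge\zeta p(x|\theta_1)\}$ ``up to a $\mu$-null indifference boundary''. That identification is valid only for $\zeta=1$: for $\zeta\neq 1$ the two sets differ on the whole region where the likelihood ratio $p(x|\theta_0)/p(x|\theta_1)$ lies strictly between $\min(\zeta,\zeta^{-1})$ and $\max(\zeta,\zeta^{-1})$, which in general has positive measure under $\Pi_0(d\theta_0|x)\Pi_1(d\theta_1)$; this is not a boundary effect. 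Under the paper's definitions the correct identities are $\PFA_B(\bar{\mathcal{R}}^{\ast},x)=\PLR_0(x,\zeta^{-1})$ and hence $\PFA_B(\mathcal{R}^{\ast},x)=1-\PLR_0(x,\zeta^{-1})$, up to the genuinely null set $\{p(x|\theta_0)=\zeta p(x|\theta_1)\}$. In fairness, this imprecision originates in the proposition itself, whose statement carries the same $\zeta$ versus $\zeta^{-1}$ slip, and the paper's own proof never verifies the $\PFA_B$/$\PD_B$ identities at all; your remaining identification $\PD_B(\mathcal{R}^{\ast},x)=\PLR_1(x,\zeta)$ is indeed verbatim from the definitions.
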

As postdata measures (i.e. depending on the observed data), contrary to the predata frequentist PFA and PD, it is therefore informative enough to give $\PLR_0(x,\zeta)$ and $\PLR_1(x,\zeta)$ for some value $\zeta$ of interest. But this is only possible if the priors and posteriors under both hypotheses are proper.

The proof of the proposition follows the proof of the Neyman-Pearson lemma restricted to deterministic tests. It stands in the appendix 5.

\section{Concluding general discussion about the PLR}
\label{sec_conclu}

The PLR introduced by \citet{dempster74} in the simple vs composite hypotheses test deserves much attention. It compares the original likelihoods $p(x|\theta_0)$ and $p(x|\theta_1)$ by computing the posterior probability that this usual LR test chooses $\Hzero$ or $\Hone$. The PLR is simple, nicely interpretable and coupled with some deep properties. Compared to the classical Bayesian hypotheses tests, first note that unlike the BF, the PLR can be defined even for improper priors, and unlike $\Pr(\Hzero|x)$ it does not require the delicate choice of some $\Pr(\Hzero)$. This is crucial in practice as well as in fundamental issues like Lindley's paradox.

The PLR also turns out to be a very natural alternative to the BF in many aspects. The PLR first compares (the original likelihoods) and then integrates, whereas the BF first integrates and then compares (the marginal likelihoods). In the simple vs composite hypotheses test, considering $\LR(x,\theta)$ as a random variable for a fixed $x$, the PLR is its posterior cumulative distribution (i.e. the probability of a one sided \textit{credible interval}) whereas the BF is its posterior mean \textit{point estimate}. This credible interval vs point estimate duality between the PLR and the BF also translates in decision theory: \citet{hwang92} stressed that $\Pr(\Hzero|x)$ does not measure evidence, since this is done only through the likelihood, but measures the accuracy of a test by \textit{estimating} the indicator function $I_{\Theta_0}(\theta)$. Also note that being the measure of a credible interval, the PLR is also a natural hypotheses test tool which connects postdata (i.e. conditioned upon $x$) hypotheses testing and credible interval inference. This formal equivalence was known to hold for predata inference (a rejection set is equivalent to a confidence interval) and ``known'' not to hold for postdata inference for usual Bayesian tools (see \citet{lehmann05} and \citet{goutis97}). Tools like the PLR set up this connection.

However, when generalizing the PLR in the section \ref{sec_ext_plr}, most of these dual properties cannot be generalized to the composite vs composite hypotheses test. Instead, a reciprocity between the PLR and the BF exists through a Neyman-Pearson lemma perspective. The second extension of the PLR has been shown in the section \ref{sec_bayes_neyman} to be a somehow optimal measure, in that it measures the set that maximizes $\PD_B$ for a fixed $\PFA_B$ (Bayesian-type version of the frequentist Neyman-Pearson lemma). Reciprocally, the BF gives a somehow optimal measure, although in the \textit{frequentist} Neyman-Pearson sense, in that it maximizes the average over $\pi_1$ of $\PD(\theta_1)$ for a fixed $\PFA$ (frequentist classical Neyman-Pearson lemma but for the marginal likelihood and not the original unknown one). 

In the simple vs composite hypotheses test, the connection between the PLR (related to credible interval) and the BF (related to point estimate) has been underlined. Another important connection lies between frequentist and Bayesian type hypotheses tests, namely frequentist p-values and $\Pr(\Hzero|x)$ or PLR. This reconciliation quest has been the subject of many debates, including Lindley's paradox in its most simple form (test of the mean of a Gaussian with a uniform prior), which has only been simply reached by the PLR by \citet{dempster74}. In the section \ref{sec_equi} we have generalized this reconciliation result to a quite general invariant frame, close to the one used in Stein's theorem, i.e. in a frame under which confidence and credible intervals are equivalent. Note that invariance is also a perspective adopted to develop and evaluate inferences, and in particular to develop new p-values as done recently by \citet{evans10} for example. For the PLR, standard simple invariance properties directly follows from the simple use of the likelihoods.

To conclude on the contribution of this paper, the equivalence between the PLR and a p-value has been proved in a general invariant frame, which nicely connects to the equivalence between confidence and credible domains. This result may contribute to a better understanding of deep and fundamental issues related to both hypotheses testing and parameter estimation, in both frequentist and Bayesian paradigms.






\bibliographystyle{apalike}
\bibliography{biblio}



\appendix

\section*{Appendix 1: Introduction to invariance in statistics}
\label{app_intro_inv1}

For a locally compact Hausdorff group $\G$, $K(\G)$ denotes the class of all continuous real-valued functions on $\G$ that have compact support. The left invariant Haar measure on $\G$ is defined as a Radon measure $H^l$ such that for all $f\in K(\G)$ and all $g_0\in\G$,
\begin{align}
\int_\G f(g) H^l(dg) = \int_\G f(g_0g) H^l(dg) \nonumber
\end{align}
The right invariant Haar measure $H^r$ on $\G$ is defined as $H^l$ but replacing $g_0g$ by $gg_0$. For a given group, both Haar measures exist and are unique up to multiplicative constants.

The (right) modulus $\Delta$ of $\G$ is the real positive valued function such that if $H^l$ is a left invariant Haar measure, then for all $f\in K(\G)$ and all $g_0 \in \G$,
\begin{align}
\label{eq_def_delta}
\int f(gg_0^{-1}) H^l(dg) = \Delta(g_0)\int f(g) H^l(dg) 
\end{align}
From the unicity of the Haar measure, $\Delta$ does not depend on the choice of $H^l$ and is a continuous function such that for all $g_1,g_2\in\G$, $\Delta(g_1g_2) = \Delta(g_1)\Delta(g_2)$, which implies that $\Delta(g^{-1}) = \Delta(g)^{-1}$. Note that for a group $\G$ the set of all right Haar measures is equal to the set of the left Haar measures if and only if $\Delta$ is identically equal to 1. This occurs for example when $\G$ is compact or commutative.

Concerning the Haar measures on the group $\G$, the initial definitions and properties imply that if $H^l$ is a left invariant Haar measure on $\G$ and $\Delta$ the modulus of $\G$ then for all $f\in K(\G)$
\begin{align}
\label{eq_prop_inverse_g}
\int f(g^{-1}) H^l(dg) = \int f(g) \Delta(g)^{-1} H^l(dg)
\end{align}
The modulus also enables to relate right and left invariant Haar measures. From the last property, the measure defined by 
\begin{align} 
\label{eq_multiplicateur2}
H^r(dg) = \Delta(g)^{-1} H^l(dg) 
\end{align}
is a right invariant Haar measure on $\G$. The same way, if $H^r$ is a right invariant Haar measure on $\G$, then the measure defined by $H^l(dg) = \Delta(g) H^r(dg)$ is a left invariant Haar measure.


The Haar measure is applied to statistics through the concept of invariance of a data model under a group of transformations. 
A parametric family $\mathcal{P}_\Theta=\big\{p(.|\theta),\theta\in\Theta\big\}$ of densities with respect to any measure $\mu$ on $\X$ is said to be invariant under the transformations group $\G$ if for each $g\in\G$ there exists a unique $\theta^*\in\Theta$ such that if the distribution of $X$ has the density $p(.|\theta)\in\mathcal P_\Theta$ then $Y = gX$ has the density $p(.|\theta^*)\in\mathcal P_\Theta$. This property defines the action of $\G$ on $\Theta$: $\theta^*$ may simply be denoted $\theta^*=\bar g\theta$ where $\{\bar g,g\in \G\}$ defines a group. 

A measure $\mu$ on $\X$ is said to be relatively invariant with multiplier $\chi$ under the group $\G$ if for all $ f\in K(\mathcal X)$ and $g\in\G$
\begin{align}
\int f(x) \mu(dx) = \chi(g) \int f(gx) \mu(dx)
\end{align}
If we assume that both the family of densities and the measure $\mu$ are respectively invariant and relatively invariant, schematically we get 
$p(x|\theta) = \chi(g) p(gx |g\theta)$ for all $x\in\mathcal X, \theta\in\Theta$ and $g\in\G$. For more about the connection between such a multiplier and the Jacobian of the transformation that leads to $gx$ from $x$, see for example \citet{berger85} or \citet{eaton07}. 
Note that the theorem \ref{th1} could be formulated differently, by defining the invariance of a probability model, but this phrasing is less common than the invariance of a family of probability densities and this would have entailed a longer presentation.

To shorten the preliminaries and without assuming any knowledge about group theory, we will not refer to group properties like transitivity, orbits... and will concretely simply assume that $\Theta$ and $\G$ are isomorphic. More precisely, we will assume that the transformation $\phi_{\theta} : \G \mapsto \Theta$ with $\phi_\theta(g) = g\theta$ is one-to-one whatever $\theta\in\Theta$. The right Haar prior on $\Theta$ is to be induced from the right Haar measure $H^r$ on $\G$ and the action of $\G$ on $\Theta$. From the frame chosen, the right Haar prior $\Pi_{a}^r$ is simply defined by $\Pi_{a}^r = H^r(\phi_{a}^{-1})$, with $a\in\Theta$. As shown in \citet{villegas81}, it turns out that the measure $\Pi_{a}^r$ actually does not depend on $a$. The induced prior is therefore unique for a fixed $H^r$ and noted $\Pi^r$. $\Pi^r = H^r(\phi_{a}^{-1})$ means that for any measurable subset $A\subset \Theta$, $\Pi^r(A) = H^r \big(\phi_{a}^{-1}A\big)$ with $\phi_{a}^{-1}A = \big\{\phi_{a}^{-1}\theta | \theta\in A\big\}$. Note that a subset $A=d\theta$ denotes an infinitesimal subset centered around $\theta$, where $\theta$ is implicit. $\Pi^r$ can be normalized into a probability measure if and only if the group $\G$ is compact, and in this case we can go back to the usual notation $\Pi^r(A) = \Pr(\theta\in A)$ where the measure $\Pi^r$ is implicit in $\Pr(.)$. 

Finally, from the data model density $p(.|\theta)$ and the prior $\Pi^r$, the posterior measure $\Pi^r_{x}$ on $\Theta$ is classically defined by
\begin{align}
\label{eq_posteri}
\Pi^r_{x}(B) &= \frac{\int_B p(x|\theta) \Pi^r(d\theta)}{m(x)} ~~~~~~~\text{ for all} ~~ B\subset\Theta \\
\text{with } ~m(x) &= \int p(x|\theta) \Pi^r(d\theta) \nonumber
\end{align}
where the marginal $m(x)$ density of $x$ is always assumed to be finite, so that $\Pi^r_{x}$ defines a probability measure even if $\Pi^r$ does not. Then the posterior probability of an event is denoted by $\Pr(.|x)$, meaning $\Pr(\theta\in B|x) = \Pi^r_{x}(B)$.

\section*{Appendix 2: General theorem and its proof}

\begin{theorem}
\label{th1}
Call $\mathcal P_\Theta = \{p(.|\theta),\theta\in\Theta\}$ a family of probability densities with respect to a measure $\mu^r$ on $\mathcal X$, specified later, and call $\G$ a group acting on $\X$. Assume that $\mathcal P_\Theta$ is invariant under the action of the group $\G$ on $\mathcal X$ and note $\bar g\theta$ the induced action of the element $g\in\G$ on the element $\theta\in\Theta$. Call $H^r$ any right Haar measure of $\G$ and define the transformations $\phi_\theta$ (for $\theta\in\Theta$) and $\phi_{x}$ (for $x\in\X$) by
\begin{equation}
\label{eq_def_phi_x} 
\begin{array}{lr}
\begin{array}{rl}
\phi_{\theta} : \bar\G &\mapsto \Theta\\
    \bar g &\mapsto \bar g\theta 
\end{array}
~~~~~~&~~~~~~
\begin{array}{rl}
\phi_{x} : \G &\mapsto \mathcal X \\
   g &\mapsto gx
\end{array}
\end{array}
\end{equation}
Assume that 
\begin{enumerate}
\item $\phi_{\theta}$ is one-to-one for all $\theta\in\Theta$ and $\phi_{x}$ is one-to-one for all $x\in \mathcal X$.
\item The prior measure $\Pi^r$ on $\Theta$ is the measure induced by $H^r$ via $\phi_{\theta}$ and the measure $\mu^r$ on $\mathcal X$ is the measure induced by $H^r$ via $\phi_{x}$: $\Pi^r = H^r(\phi_{\theta}^{-1})$ and $\mu^r = H^r(\phi_{x}^{-1})$.
\item The marginal density of $x$ is finite, so that the posterior measure $\Pi^r_{x}$ on $\Theta$, classically defined by the equation (\ref{eq_posteri}), defines the posterior probability $\Pr(.|x_0)$.
\end{enumerate}
Then, the PLR defined by the equation (\ref{def_PLR}) can be reexpressed, for any $\zeta>0$ and {\em any} $c\in\mathcal X$, as the frequentist integral:
\begin{align}
\PLR(x_0,\zeta) &= \Pr\Big(~p(x_0|\theta_0)\Delta\big(\phi_{x_0}^{-1}c\big) ~ \le ~ \zeta~ p(x|\theta_0)\Delta\big(\phi_{x}^{-1}c\big) \mid \theta_0\Big)
\end{align}
where $\Delta$ is the modulus of the group $\G$, as defined in the equation (\ref{eq_def_delta}), and in practice $x_0\in\mathcal X$ is the observed data and $\theta_0\in\Theta$ the parameter value under the null hypothesis.
\end{theorem}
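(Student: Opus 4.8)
The plan is to transport the posterior integral defining $\PLR(x_0,\zeta)$ from $\Theta$ to the group $\G$, rewrite everything in terms of the sampling density at $\theta_0$ by means of the invariance identity, and then push the resulting group integral forward to $\X$ so that it becomes a frequentist probability under $\theta_0$. The argument is essentially a chain of changes of variable, and the whole bookkeeping is organized by the modulus $\Delta$. First I would write $\PLR(x_0,\zeta) = m(x_0)^{-1}\int_{\{\theta:\, p(x_0|\theta_0)\le\zeta p(x_0|\theta)\}} p(x_0|\theta)\,\Pi^r(d\theta)$ and use assumption (2), $\Pi^r = H^r(\phi_{\theta_0}^{-1})$, together with the base-point independence of the induced prior recalled in Appendix 1, to pull this back to an integral over $\bar g$ via $\theta = \bar g\theta_0$, identifying $\G$ with $\bar\G$ through the group isomorphism $g\mapsto\bar g$ that assumption (1) guarantees.

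Next I would establish the key invariance relation $p(x|\theta)=\Delta(g)^{-1}p(gx|\bar g\theta)$. This is where assumption (2) on $\mu^r=H^r(\phi_{x}^{-1})$ enters: computing how $H^r$ behaves under the left translation $g'\mapsto gg'$ shows that $\mu^r$ is relatively invariant with multiplier $\chi=\Delta^{-1}$, and combining this with the invariance of the family $\mathcal P_\Theta$ yields the identity. Substituting $x=g^{-1}x_0$ then gives the integrand $p(x_0|\bar g\theta_0)=\Delta(g)\,p(g^{-1}x_0|\theta_0)$, and the same identity applied to $m(x_0)=\int \Delta(g)\,p(g^{-1}x_0|\theta_0)\,H^r(dg)$ collapses it, after one inversion, to $\int p(x|\theta_0)\,\mu^r(dx)=1$, since $p(\cdot|\theta_0)$ is by assumption a density with respect to $\mu^r$. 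So the normalizing constant is simply $1$.

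With $m(x_0)=1$ I would then apply the right-Haar inversion formula $\int h(g)H^r(dg)=\int h(g^{-1})\Delta(g)H^r(dg)$, itself a consequence of $H^r(dg)=\Delta(g)^{-1}H^l(dg)$ and the left-Haar inversion property of Appendix 1, to turn $g^{-1}x_0$ into $gx_0$ inside both integrand and constraint. The $\Delta$ factors cancel, leaving $\int \mathbf{1}[\,p(x_0|\theta_0)\le\zeta\,\Delta(\phi_{x_0}^{-1}x)^{-1}p(x|\theta_0)\,]\,p(x|\theta_0)\,\mu^r(dx)$ once one recognizes, via $g=\phi_{x_0}^{-1}(x)$, that the surviving modulus is $\Delta(\phi_{x_0}^{-1}x)$. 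Because $p(\cdot|\theta_0)\mu^r$ is exactly the sampling law at $\theta_0$, this is already the announced frequentist probability, in the special case $c=x_0$.

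To reach the stated $c$-symmetric form for arbitrary $c$, I would finally use the homomorphism property of $\Delta$ together with the cocycle identity $\phi_x^{-1}(c)=\phi_{x_0}^{-1}(c)\,(\phi_{x_0}^{-1}x)^{-1}$, which holds because the $\phi$'s are one-to-one by assumption (1), giving $\Delta(\phi_x^{-1}c)/\Delta(\phi_{x_0}^{-1}c)=\Delta(\phi_{x_0}^{-1}x)^{-1}$; multiplying the constraint through by $\Delta(\phi_{x_0}^{-1}c)$ produces $p(x_0|\theta_0)\Delta(\phi_{x_0}^{-1}c)\le\zeta\,p(x|\theta_0)\Delta(\phi_x^{-1}c)$ and simultaneously exhibits its independence of $c$. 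I expect the main obstacle to be precisely the modulus bookkeeping in the middle steps: fixing the multiplier of $\mu^r$ (hence the direction $\chi=\Delta^{-1}$) and correctly tracking the $\Delta$ factors through the inversion $g\mapsto g^{-1}$, since an error of one power of $\Delta$ there would simultaneously break the normalization $m(x_0)=1$ and the final cancellation.
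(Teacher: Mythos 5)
Your proposal is correct and follows essentially the same route as the paper's proof: pull the posterior integral back to the group via $\phi_{\theta_0}$, establish that the induced measures are relatively invariant with multiplier $\Delta^{-1}$ and hence $p(x|\theta)=\Delta(g)^{-1}p(gx|\bar g\theta)$, apply the Haar inversion $g\mapsto g^{-1}$ with its modulus factor, push forward to $\mathcal X$ via $\phi_{x_0}$ noting $m(x_0)=1$, and recover the $c$-symmetric form from the homomorphism property of $\Delta$ and the composition identity for the $\phi$'s. The differences are cosmetic: you fix the base points $x_0,\theta_0$ from the start and compress the paper's detour through $H^l$ into a single right-Haar inversion formula, whereas the paper carries generic base points $a,b$ and specializes only at the end.
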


Note that as seen in the previous appendix, the measures $\mu$ and $\Pi^r$ defined in the theorem \ref{th1} do not depend on the choice of $\theta\in\Theta$ and $x\in\mathcal X$ in the functions $\phi_\theta$ and $\phi_x$. In order to clarify the proof, we note $a$ instead of $x$ and $b$ instead of $\theta$ in the following. We shall make use of the following lemma:
\begin{lemma}
\label{lemme_mu_relativ_inv}
The measures $\mu$ on $\mathcal X$ and $\Pi^r$ on $\Theta$ induced above by the right Haar measure $H^r$ on $\G$ are relatively invariant with modulus $\Delta^{-1}$. 
\end{lemma}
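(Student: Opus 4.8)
The plan is to reduce the relative invariance of both $\mu$ and $\Pi^r$ to a single transformation property of the right Haar measure $H^r$ under left translation, and then to transport that property through the pushforward maps $\phi_a$ and $\phi_b$. As in Appendix 1, I fix arbitrary reference points $a\in\X$ and $b\in\Theta$, so that $\mu=H^r(\phi_a^{-1})$ and $\Pi^r=H^r(\phi_b^{-1})$; the independence of these measures from the choice of $a$ and $b$ is the \citet{villegas81} fact already recalled and is not needed for the computation itself.

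First I would establish the key group-level identity: for every fixed $g_0\in\G$ and every $f\in K(\G)$,
\begin{align}
\int f(g_0 g)\, H^r(dg) &= \Delta(g_0) \int f(g)\, H^r(dg). \nonumber
\end{align}
This follows by writing $H^r(dg)=\Delta(g)^{-1}H^l(dg)$ as in equation (\ref{eq_multiplicateur2}), applying the left invariance of $H^l$ to the auxiliary function $g\mapsto f(g)\Delta(g_0^{-1}g)^{-1}$, and simplifying with the multiplicative property $\Delta(g_0^{-1}g)=\Delta(g_0)^{-1}\Delta(g)$ together with $\Delta(g_0^{-1})=\Delta(g_0)^{-1}$. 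Concretely, left invariance turns $\int f(g_0 g)\Delta(g)^{-1}H^l(dg)$ into $\Delta(g_0)\int f(g)\Delta(g)^{-1}H^l(dg)$, which is exactly $\Delta(g_0)\int f(g)H^r(dg)$.

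Second I would transport this identity to $\X$ through $\phi_a$. By definition of the pushforward, $\int_\X h(x)\,\mu(dx)=\int_\G h(ga)\,H^r(dg)$ for any test function $h$. Since the action is a genuine left action, $\phi_a(g_0g)=(g_0g)a=g_0(ga)=g_0\phi_a(g)$, so applying the displayed identity to $F(g):=f(ga)$ gives
\begin{align}
\int_\X f(g_0 x)\,\mu(dx) &= \int_\G f\big((g_0 g)a\big)\,H^r(dg) = \int_\G F(g_0 g)\,H^r(dg) \nonumber\\
&= \Delta(g_0)\int_\G F(g)\,H^r(dg)=\Delta(g_0)\int_\X f(x)\,\mu(dx). \nonumber
\end{align}
Rearranging yields $\int f(x)\mu(dx)=\Delta(g_0)^{-1}\int f(g_0 x)\mu(dx)$, which is precisely relative invariance of $\mu$ with multiplier $\chi=\Delta^{-1}$. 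The identical argument applied to $\phi_b$ on $\Theta$, using that $\theta\mapsto\bar g\theta$ is a homomorphic left action so that $\phi_b(g_0g)=\bar g_0\phi_b(g)$, delivers the corresponding statement for $\Pi^r$.

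I expect the only delicate point to be the first step: keeping careful track of which Haar measure is left- and which is right-invariant, and of the direction in which the modulus enters, since a single slip there would flip $\Delta$ into $\Delta^{-1}$ and invert the multiplier. Everything after that is a formal change of variables that uses nothing beyond the left-action identity $\phi_c(g_0g)=g_0\phi_c(g)$, valid verbatim on both $\X$ and $\Theta$.
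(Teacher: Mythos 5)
Your proof is correct and follows essentially the same route as the paper's: both rest on writing $H^r(dg)=\Delta(g)^{-1}H^l(dg)$, invoking the multiplicativity of $\Delta$, and applying the left invariance of $H^l$ to the function $g\mapsto f(ga)\Delta(g)^{-1}$. The only difference is organizational — you first isolate the group-level identity $\int f(g_0g)\,H^r(dg)=\Delta(g_0)\int f(g)\,H^r(dg)$ and then push it forward through $\phi_a$ and $\phi_b$, whereas the paper runs the same chain of equalities directly on $\mathcal X$ — which is a harmless (indeed slightly cleaner) repackaging of the identical argument.
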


\begin{proof}
\begin{align*}
\int f(g_0x) \mu(dx) &= \int f(g_0x) H^r\phi_{a}^{-1}(dx) ~~\text{(Def. of $\mu$ in the Cond. of Th. \ref{th1})}\\
  &= \int f\big(g_0\phi_{a}g\big) H^r(dg) ~~\text{(transformation $g=\phi_{a}^{-1}x$)} \\
  &= \int f\big(g_0ga\big) \Delta(g)^{-1} H^l(dg) ~~\text{(Def. of $\phi_{a}$ and prop. eq. (\ref{eq_multiplicateur2}))} \\
  &= \Delta(g_0) \int f\big(g_0ga\big) \Delta(g_0g)^{-1} H^l(dg) ~~\text{(Multiplicity prop. of $\Delta$)} \\
  &= \Delta(g_0) \int f(ga) \Delta(g)^{-1} H^l(dg) ~~\text{($H^l$ left invariant)} \\
  &= \Delta(g_0) \int f(x) \mu(dx) ~~\text{(previous computation made in reverse order)}
\end{align*}

This also implies that a Haar prior induced as in the theorem \ref{th1}, i.e. from a right invariant Haar measure on $\G$, is relatively invariant.

\begin{align}
\PLR(x_0,\zeta) &= \Pr\Big(p(x_0|\theta_0)\le\zeta p(x_0|\theta)~\Big|~x_0\Big) \nonumber \\
  &= \frac{1}{m(x_0)} \int_{\big\{\theta \mid p(x_0|\theta_0)\le\zeta p(x_0|\theta)\big\}} p(x_0|\theta)\Pi^r(d\theta) \label{eq_plr_cal1}\\
  &= \frac{1}{m(x_0)} \int_{\big\{\theta \mid p(x_0|\theta_0)\le\zeta p(x_0|\theta)\big\}} p(x_0|\theta)H^r\big(\phi_{b}^{-1}(d\theta)\big)~~\text{(Def. $\Pi^r$ in the Cond. of Th. \ref{th1})} \nonumber\\
  &= \frac{1}{m(x_0)} \int_{\big\{g \mid p(x_0|\theta_0)\le\zeta p(x_0|\phi_{b}g)\big\}} p(x_0|\phi_{b}g)H^r(dg) ~~\text{($g=\phi_{b}^{-1}\theta$)} \nonumber\\
  &= \frac{1}{m(x_0)} \int_{\big\{g \mid p(x_0|\theta_0)\le\zeta p(x_0|gb)\big\}} p(x_0|gb)H^r(dg)~~\text{(Def. $\phi_{\theta}$ eq. (\ref{eq_def_phi_x}))} \nonumber\\
  &= \frac{1}{m(x_0)} \int_{\big\{g \mid p(x_0|\theta_0)\le\zeta p(x_0|gb)\big\}} p(x_0|gb)\Delta(g)^{-1}H^l(dg)~~\text{(Prop. eq. (\ref{eq_multiplicateur2}))}\nonumber \\
  &= \frac{1}{m(x_0)} \int_{\big\{g \mid p(x_0|\theta_0)\le\zeta p(x_0|g^{-1}b)\big\}} p\big(x_0\big|g^{-1}b\big)H^l(dg) ~~\text{(Prop. eq. (\ref{eq_prop_inverse_g}))} \nonumber
\end{align}

But according to lemma \ref{lemme_mu_relativ_inv}, $\mu$ is relatively invariant with modulus $\Delta^{-1}$. Since the density family is invariant, 
\begin{align*}
p(x|\theta) = \Delta(g)^{-1} p(gx |g\theta) ~~~ \text{for all } x\in\mathcal X, \theta\in\Theta, g\in\G
\end{align*}
i.e. 
\begin{align*}
p(x|g^{-1}\theta') = \Delta(g)^{-1} p(gx |\theta') ~~ \text{for all } x\in\mathcal X, \theta'\in\Theta, g\in\G
\end{align*}

Then,
\begin{align*}
\PLR(x_0,\zeta) &= \frac{1}{m(x_0)} \int_{\big\{g :\mid p(x_0|\theta_0)\le\zeta p(gx_0|b)\Delta(g)^{-1}\big\}} \Delta(g)^{-1} p(gx_0|b)H^l(dg) \\
  &= \frac{1}{m(x_0)} \int_{\big\{g \mid p(x_0|\theta_0)\le\zeta p(gx_0|b)\Delta(g)^{-1}\big\}} p(gx_0|b)H^r(dg)~~\text{(Prop. eq. (\ref{eq_multiplicateur2}))} \\
  &= \frac{1}{m(x_0)} \int_{\big\{g \mid p(x_0|\theta_0)\le\zeta p(gx_0|b)\Delta(g)^{-1}\big\}} p(gx_0|b)\mu\big(\phi_{a}(dg)\big)~~\text{(Def. $\mu$)}
\end{align*}

It can be noticed that the equation (\ref{eq_plr_cal1}) depends neither on $a\in\mathcal X$ nor on $b\in\Theta$. Choose now for simplicity $a=x_0$. Then, making the transformation $x=\phi_{x_0}g=gx_0$,
\begin{align*}
\PLR(x_0,\zeta)  &= \frac{1}{m(x_0)} \int_{\big\{x \mid p(x_0|\theta_0)\le\zeta p(x|b)\Delta\big(\phi_{x_0}^{-1}x\big)^{-1}\big\}} p(x|b)\mu(dx)
\end{align*}

By a similar computation we get the expression of the marginal density of $X$ evaluated at $x_0$:
\begin{align*}
m(x_0) &= \int p(x_0|\theta)\Pi^r(d\theta) = \int p(x|b)\mu(dx) = 1
\end{align*}
The marginal density of $X$ is constant, the same way the frequentist risk of an invariant estimator does not depend on $\theta$. So
\begin{align*}
\PLR(x_0,\zeta) &= \int_{\big\{x \mid p(x_0|\theta_0)\le\zeta p(x|b)\Delta\big(\phi_{x_0}^{-1}x\big)^{-1}\big\}} p(x|b)\mu(dx)
\end{align*}

In order to get a form closer to a p-value, we choose from now $b=\theta_0$ and note that for {\em any} $c\in\mathcal X$,
\begin{align}
\Delta\big(\phi_{x_0}^{-1}x\big) = \frac{\Delta\big(\phi_{c}^{-1}x\big)}{\Delta\big(\phi_{c}^{-1}x_0\big)}  \label{eq_role_c}
\end{align}
because if we note
\begin{align*}
g &= \phi_{x_0}^{-1}x \\
g_1 &= \phi_{c}^{-1}x \\
g_2 &= \phi_{c}^{-1}x_0 
\end{align*}
then on one side $gx_0 = x$ and on the other $g_1(g_2^{-1}x_0) = g_1c = x$ so that
\begin{align*}
gx_0 &= (g_1g_2^{-1})x_0 \\
\text{so }~ \phi_{x_0}g &= \phi_{x_0}(g_1g_2^{-1}) \\
\text{so }~~~~~ g &= g_1g_2^{-1} ~~~ \text{($\phi_{a}$ is one-to-one)} \\
\text{so }~ \Delta(g) &= \frac{\Delta(g_1)}{\Delta(g_2)} ~~~ \text{(Prop. of $\Delta$)}
\end{align*}

Finally, for any $c\in\mathcal X$ 
\begin{align}
\PLR(x_0,\zeta) &= \int_{\left\{x \mid \frac{p(x_0|\theta_0)}{\Delta(\phi_{c}^{-1}x_0)}\le\zeta \frac{p(x|\theta_0)}{\Delta(\phi_{c}^{-1}x)}\right\}} p(x|\theta_0)\mu(dx)
\end{align}

It is also interesting to note that
\begin{align}
\phi_{a}^{-1}b &= (\phi_{b}^{-1}a)^{-1} \label{eq_phi_ab} \\
\text{since } g=\phi_{a}^{-1}b \Rightarrow ga = b &\Rightarrow a=g^{-1}b \Rightarrow g^{-1} = \phi_{b}^{-1}a \nonumber
\end{align}
so that the same way we have
\begin{align*}
\PLR(x_0,\zeta) &= \int_{\big\{x \mid p(x_0|\theta_0)\Delta\big(\phi_{x_0}^{-1}c\big)\le\zeta p(x|\theta_0)\Delta\big(\phi_{x}^{-1}c\big)\big\}} p(x|\theta_0)\mu(dx) \\
  &= \Pr\Big(p(x_0|\theta_0)\Delta\big(\phi_{x_0}^{-1}c\big)\le\zeta p(x|\theta_0)\Delta\big(\phi_{x}^{-1}c\big) \Big| \theta_0\Big)
\end{align*} 
\end{proof}

\section*{Appendix 3: Proof of the theorem \ref{th_lebesgue} and corollaries \ref{th1_suff}, \ref{cor_pval}}

The theorem \ref{th_lebesgue} is a corollary of the theorem \ref{th1} presented and proved in the previous appendix: the theorem \ref{th1} can be reexpressed more simply by assuming that the likelihood family and the induced Haar measures are absolutely continuous with respect to the Lebesgue measure. 

\begin{proof}[Proof of the theorem \ref{th_lebesgue}]
The proof  only consists of reexpressing the domains of integration because the integrands expression are not functions of the use of the decomposition of the measures over some other measures ($\mu$ or Lesbesgue). The proof even actually only consists of reexpressing the domain of integration of the p-value because the domain of integration of the PLR does not depend on the density over $\mathcal X$ used since the domain of integration is a subset of $\Theta$, not $\mathcal X$. 

If we note $p^{\mu}(.|\theta)$ the density with respect to the induced Haar measure $\mu^r$ and $p(.|\theta)$ the density with respect to the Lebesgue measure, we have by definition
\begin{align*}
P(dx|\theta) = p^{\mu}(x|\theta)\mu^r(dx) =  p^{\mu}(x|\theta)\pi^r(x)dx ~~&\text{ and }~~ P(dx|\theta) = p(x|\theta)dx \\
\text{ and so }~~ p^{\mu}(x|\theta) = \frac{p(x|\theta)}{\pi^r(x)}
\end{align*}
On the other side the modulus $\Delta$ can also be reexpressed as a function of the induced prior densities $\pi^l(x)$ and $\pi^r(x)$. From the equations (\ref{eq_phi_ab}) and (\ref{eq_multiplicateur2}),
\begin{align*}
\Delta\left(\phi_x^{-1}c\right) &= \Delta\left(\phi_c^{-1}x\right)^{-1} = \frac{H^l\left(d\phi_c^{-1}x\right)}{H^r\left(d\phi_c^{-1}x\right)} = \frac{\mu^r(dx)}{\mu^l(dx)} = \frac{\pi^r(x)}{\pi^l(x)}
\end{align*}
Combining these two results we get
\begin{align*}
p^{\mu}(x|\theta) \Delta\left(\phi_x^{-1}c\right) =  \frac{p(x|\theta)}{\pi^l(x)}
\end{align*}
\end{proof}

\begin{proof}[Proof of the  corollary \ref{th1_suff}]
\begin{align*}
\PLR(x_0,\zeta) &= \Pr\Big((p_{X|\theta_0}(x_0) ~\le~ \zeta~ p_{X|\theta}(x_0)~\Big|~x_0\Big) \\
  &= \Pr\Big(p_{X|S(X)}(x_0|S(x_0))~p_{S(X)|\theta_0}(S(x_0)) ~\le~ \zeta~p_{X|S(X)}(x_0|S(x_0))~p_{S(X)|\theta}(S(x_0))~\Big|~x_0\Big) 
\end{align*}
because since $S(x)$ is a function of $x$, $p_{X|\theta}(x) = p_{X,S(X)|\theta}\big(x,S(x)\big)$ and since in addition $S(X)$ is a sufficient statistic of $X$,
\begin{align}
p_{X|\theta}(x) &= p_{X |S(X),\theta}\big(x|S(x)\big)~p_{S(X)|\theta}\big(S(x)\big) \nonumber \\
  &= p_{X|S(X)}\big(x|S(x)\big)~~p_{S(X)|\theta}\big(S(x)\big)
\end{align}

Simplifying the densities which do not depend on $\theta$,
\begin{align*}
\PLR(x_0,\zeta) &= \Pr\Big(p_{S(X)|\theta_0}\big(S(x_0)\big) ~\le~ \zeta ~p_{S(X)|\theta_0}\big(S(x_0)\big)~\Big|~S(x_0)\Big) ~~\text{ if ~ $p_{X|S(X)}\big(x_0|S(x_0)\big)>0$}\\
  &= \Pr\Big(p_{S(X)|\theta_0}\big(S(x_0)\big)(\pi^l(S(x_0)))^{-1}~\le~\zeta~ p_{S(X)|\theta_0}\big(S(x)\big)(\pi^l(S(x)))^{-1} \Big| \theta_0\Big) ~~~ \text{ (Th. \ref{th_lebesgue}) }
\end{align*}
\end{proof}

\begin{proof}[Proof of the  corollary \ref{cor_pval}] 
First reexpress the PLR under the conditions of the theorem \ref{th_lebesgue} by using a cumulative distribution. Note $T(x)$ the statistic: 
\begin{align}
T(x) &= p_{S(X)|\theta_0}\big(S(x)\big)(\pi^l(S(x)))^{-1}  \nonumber
\end{align}
Seen as a random variable, the dataset $x$ induces the random variable $T(X)$ the same way the statistic $S(x)$ induced $S(X)$. Note $F_{T(X)|\theta_0}$ the cumulative distribution of $T(X)$ under the null hypothesis:
\begin{align*}
F_{T(X)|\theta_0}(\zeta) &= \Pr\big(T(x)\le\zeta \big| \theta_0\big)
\end{align*}
Starting from the theorem \ref{th_lebesgue}, the PLR can be reexpressed as
\begin{align}
\PLR(x_0,\zeta) &= 1 - F_{T(X)|\theta_0}\big(\zeta^{-1}T(x_0)\big) \nonumber
\end{align}

In particular, for a threshold $\zeta=1$, one can directly notice that the PLR is equal to the \textit{p-value} defined for the GLR by the equation (\ref{eq_pval_f}), but now instead associated to the test statistic $T(x)$. 

Also note that the frequentist test corresponding to the PLR is then given, for any threshold $\lambda>0$, by
\begin{align}
\text{Reject $H_0$ if }~ & ~~~~~p_S\big(S(x)|\theta_0\big)~(\pi^l(S(x)))^{-1}~ \le~ \lambda \nonumber
\end{align}
\end{proof}

\section*{Appendix 4: Proof of the remark \ref{lem_joint_meas}}

If there exists a joint measure over $\Theta_0\times\Theta_1\times\X | H_0$ and if the events defined on the sets $\Theta_0\times\X | H_0$ and $\Theta_1 | H_0$ are independent, then
\begin{align}
P(d\theta_0,d\theta_1,x | H_0) &= P(d\theta_0,x | H_0) P(d\theta_1 | H_0)\nonumber
\end{align}
This can be reformulated using the standard previous notations:
\begin{align}
P(d\theta_0,d\theta_1,x | H_0) &= \Pi_0(d\theta_0|x) m_0(x) \Pi_1(d\theta_1)\nonumber
\end{align}
Then, noting $\Pi_{01,0}(. | x) = P(. | H_0,x)$,
\begin{align}
\Pi_{01,0}(d\theta_0,d\theta_1 | x) &= \frac{P(d\theta_0,d\theta_1,x|H_0)}{\int P(d\theta_0,d\theta_1,x|H_0)} \nonumber\\
  &= \frac{m_0(x)\Pi_0(d\theta_0|x) \Pi_1(d\theta_1)}{m_0(x)}\nonumber \\
  &= \Pi_0(d\theta_0|x) \Pi_1(d\theta_1)\nonumber
\end{align}

\section*{Appendix 5: Proof of the proposition \ref{lemma_np_b}}

Recall that the set $\mathcal{R}^{\ast}(x)$ defined in equation (\ref{eq_set_lr}) is the LR set that rejects $H_0$, and upon which is defined $\PLR_1$ in equation (\ref{eq_plr1}). Call $\PFA_B(\mathcal{R}^{\ast},x)$ and $\PD_B(\mathcal{R}^{\ast},x)$ the associated integrals defined in equations (\ref{eq_pfab}) and (\ref{eq_pdb}). Call $\mathcal{R}(x) \subset \Theta_0\times\Theta_1$ any other set and $\PFA_B(\mathcal{R},x)$ and $\PD_B(\mathcal{R},x)$ its associated integrals.

The goal is to show that $\PFA_B(\mathcal{R},x)\le\PFA_B(\mathcal{R}^{\ast},x)$ implies that $\PD_B(\mathcal{R},x)\le\PD_B(\mathcal{R}^{\ast},x)$ for any test set $\mathcal{R}$. The fact that $\PD_B(\mathcal{R},x)\le\PD_B(\bar{\mathcal{R}}^{\ast},x)$ implies that $\PFA_B(\mathcal{R},x)\le\PFA_B(\bar{\mathcal{R}}^{\ast},x)$ is shown in a reciprocal way.

One can check that the following inequality holds for all $x\in\X$, all $\theta_0\in\Theta_0$ and all $\theta_1\in\Theta_1$:
\begin{align}
\left(I_{\mathcal{R}^{\ast}(x)}(\theta_0,\theta_1) - I_{\mathcal{R}(x)}(\theta_0,\theta_1)\right)(p(x|\theta_0) - \zeta p(x|\theta_1)) \le 0\nonumber
\end{align}

Since the inequality is true for all $x, \theta_0$ and $\theta_1$, we can multiply the left hand side by any positive term and integrate over $\Theta_0\times\Theta_1$. This implies in particular:
\begin{align}
\int_{\Theta_0}\int_{\Theta_1} \frac{\Pi_0(d\theta_0)}{m_0(x)} \frac{\Pi_1(d\theta_1)}{m_1(x)} \left(I_{\mathcal{R}^{\ast}(x)}(\theta_0,\theta_1) - I_{\mathcal{R}(x)}(\theta_0,\theta_1)\right)(p(x|\theta_0) - \zeta p(x|\theta_1)) \le 0\nonumber
\end{align}

But since $\Pi_i(d\theta_i|x) = \Pi_i(d\theta_i)p(x|\theta_i)m_i(x)^{-1}$ for $i=0,1$, this implies
\begin{align}
&\frac{1}{m_1(x)}\int_{\Theta_0}\int_{\Theta_1} \Pi_0(d\theta_0|x) \Pi_1(d\theta_1) \left(I_{\mathcal{R}^{\ast}(x)}(\theta_0,\theta_1) - I_{\mathcal{R}(x)}(\theta_0,\theta_1)\right)\nonumber \\
& - \frac{\zeta}{m_0(x)}\int_{\Theta_0}\int_{\Theta_1} \Pi_1(d\theta_1|x) \Pi_0(d\theta_0) \left(I_{\mathcal{R}^{\ast}(x)}(\theta_0,\theta_1) - I_{\mathcal{R}(x)}(\theta_0,\theta_1)\right)\le 0\nonumber
\end{align}
where we recognize $\PFA_B$ and $\PD_B$ as defined in equations (\ref{eq_pfab}) and (\ref{eq_pdb}):
\begin{align}
&\frac{\PFA_B(\mathcal{R}^{\ast}(x)) - \PFA_B(\mathcal{R}(x))}{m_1(x)} - \zeta\frac{\PD_B(\mathcal{R}^{\ast}(x)) - \PD_B(\mathcal{R}(x))}{m_0(x)} \le 0 \nonumber
\end{align}
Therefore we finally have
\begin{align}
\zeta\frac{\PD_B(\mathcal{R}(x)) - \PD_B(\mathcal{R}^{\ast}(x))}{m_0(x)} \le \frac{\PFA_B(\mathcal{R}(x)) - \PFA_B(\mathcal{R}^{\ast}(x))}{m_1(x)} \nonumber
\end{align}
from which we conclude the final implication 
\begin{align}
\PFA_B(\mathcal{R}(x)) \le \PFA_B(\mathcal{R}^{\ast}(x))  ~~ \Rightarrow ~~ \PD_B(\mathcal{R}(x)) \le \PD_B(\mathcal{R}^{\ast}(x))\nonumber
\end{align}

The fact that $\PD_B(\mathcal{R},x)\le\PD_B(\bar{\mathcal{R}}^{\ast},x)$ implies that $\PFA_B(\mathcal{R},x)\le\PFA_B(\bar{\mathcal{R}}^{\ast},x)$ is shown in a reciprocal way.

\end{document}